\documentclass[reprint,amsmath,amssymb,aps,pra,floatfix]{revtex4-2}

\usepackage{amsmath,amssymb}
\usepackage{graphicx}
\usepackage{dcolumn}
\usepackage{booktabs}
\usepackage{bm}
\usepackage{array}
\usepackage[hidelinks]{hyperref}

\usepackage{amsthm}
\usepackage{braket}

\newtheorem{theorem}{Theorem}
\newtheorem{proposition}[theorem]{Proposition}

\theoremstyle{definition}
\newtheorem{definition}[theorem]{Definition}
\theoremstyle{remark}

\newcommand{\la}{\lambda}
\newcommand{\si}{\sigma}
\newcommand{\ov}[1]{\overline{#1}}
\newcommand{\SH}{\mathbb{H}_{s}}
\newcommand{\Rey}{\operatorname{Re}}
\newcommand{\Hy}{\operatorname{Hy}}
\newcommand{\Norm}[1]{\mathcal{N}(#1)}
\newcommand{\Lag}{\mathcal{L}}
\newcommand{\sqi}{\iota}
\newcommand{\ii}{\mathrm{i}}
\newcommand{\NormQ}[1]{\mathcal{N}_{\!Q}(#1)}

\begin{document}

\title{Split-quaternionic structure and canonical quantization of the Bateman dual oscillator}

\author{Mathieu Beau}
\affiliation{Department of Physics, University of Massachusetts Boston, Boston, MA 02125, USA}

\date{\today}

\begin{abstract}
Bateman's dual oscillator embeds a damped harmonic oscillator and an independent anti-damped adjoint coordinate in a conservative doubled system. We develop a split-quaternionic formulation that remains consistent from the classical equations through ordinary complex canonical quantization. The doubled dynamics is written as a two-sided hypercomplex evolution whose conserved indefinite quadratic form is the split norm, while the sign of the generator square unifies the underdamped, overdamped, and critical regimes. The Bateman Lagrangian is recovered as a real scalar action on the split-complex configuration subalgebra. After quantization, operator-valued split quaternions belong to the complexified algebra $\mathbb C\otimes_{\mathbb R}\mathbb H_s\simeq M_2(\mathbb C)$. Noncommuting mechanical components require a symmetrized quantum split norm, which exactly reproduces the symmetrically ordered canonical Hamiltonian. Null idempotents resolve the crossed canonical structure of the doubled phase space. In particular, damping yields $[m\dot{\widehat y},m\dot{\widehat x}]=\mathrm{i}\hbar m\gamma$ and a Robertson--Schr\"odinger uncertainty relation. The limit $\gamma\to0$ retains the doubled positive/negative geometry; the ordinary single-oscillator Heisenberg relation is recovered only after selecting the positive normal mode, or by undoubling before quantization. The scalar split action gives the exact quadratic propagator and Gaussian evolution. A formal trace over the auxiliary coordinate is finally compared with high-temperature Caldeira--Leggett dynamics, clarifying that the doubled model is a coherent zero-noise embedding rather than a microscopic reservoir theory.
\end{abstract}
\keywords{Bateman oscillator; split quaternions; canonical quantization; quantum dissipation; indefinite metric; path integral}

\maketitle

\section{Introduction}

The damped harmonic oscillator,
\begin{equation}\label{eq:osc}
\ddot x+\gamma\dot x+\omega_0^2 x=0,
\end{equation}
has no conserved mechanical energy for $\gamma>0$, but it admits conservative enlargements. Bateman introduced a dual completion in which an independent adjoint coordinate obeys the time-reversed anti-damped equation \cite{Bateman1931}. Feshbach and Tikochinsky subsequently developed the canonical quantization of this doubled system \cite{FeshbachTikochinsky1977}. Related approaches include the Caldirola--Kanai model \cite{Caldirola1941,Kanai1948}, inverse-variational constructions \cite{Chandrasekar2007,Bersani2021,Riewe1996}, doubled-variable methods \cite{MorseFeshbach1953,MartinezPerez2018}, and nonconservative variational principles \cite{Galley2013}. Reviews of quantum damped oscillators include Refs.~\cite{Dekker1981,UmYeonGeorge2002}.

Split quaternions were introduced by Cockle \cite{Cockle1849}; their indefinite geometry and matrix realizations are standard \cite{OzdemirErgin2006,KulaYayli2007}. Their use in dissipative mechanics, including the algebraic classification of damping regimes, has also been explored \cite{Legrand2022}. Connections among Bateman dynamics, magnetic analogies, and noncommutative structures were studied by Pal, Nandi, and Chakraborty in a Moyal-deformed setting \cite{PalNandiChakraborty2018}. The present construction is different: no noncommutative configuration space is postulated. The relevant noncommutativity arises intrinsically between covariant mechanical momenta after ordinary canonical quantization and is organized by split-quaternion conjugation.

The doubled Lagrangian used below is linearly equivalent to Bateman's, and the corresponding quantum theory belongs to the Feshbach--Tikochinsky class. We therefore claim neither a new variational embedding nor a new quantization prescription. The contribution is a unified algebraic organization of the known doubled system, together with consequences that become transparent in that organization. First, a two-sided split-quaternionic equation simultaneously encodes the damped and adjoint sectors, the conserved indefinite norm, and the three damping regimes. Second, operator-valued split quaternions in $\mathbb C\otimes_{\mathbb R}\mathbb H_s$ require a symmetrized quantum norm that reproduces the canonical Hamiltonian. Third, the null-idempotent decomposition resolves the crossed canonical pairs and yields a damping-controlled Robertson--Schr\"odinger relation for the two mechanical velocities. Fourth, the Bateman action and exact quadratic propagator can be written directly in split-complex scalar form.

A final comparison with the high-temperature Markovian Caldeira--Leggett model is included only to delimit the physical meaning of the doubled construction. Tracing the adjoint coordinate is a state-dependent formal reduction; it does not generate the diffusion, decoherence, or thermalization associated with a microscopic bath.

\section{Split-quaternion algebra and calculus}

\subsection{Algebra, conjugation, and norm}

\begin{definition}[Split-quaternion algebra]
\label{def:splitalgebra}
A split quaternion is
\begin{equation}
z=a+b\la+c\sqi+d\si,
\qquad a,b,c,d\in\mathbb R,
\end{equation}
with basis relations
\begin{equation}\label{eq:basis}
\la^2=\si^2=1,\qquad \sqi^2=-1,\qquad \sqi\la\si=1,
\end{equation}
which imply
\begin{equation}\label{eq:prods}
\sqi\la=\si=-\la\sqi,\qquad
\si\la=\sqi=-\la\si,\qquad
\si\sqi=\la=-\sqi\si.
\end{equation}
We write $\Rey z=a$ and $\Hy z=b\la+c\sqi+d\si$. Split conjugation and the split norm are
\begin{equation}
\ov{z}=a-b\la-c\sqi-d\si,
\qquad
\Norm{z}=z\ov{z}=a^2-b^2+c^2-d^2.
\label{eq:norm}
\end{equation}
The norm is a real indefinite quadratic form of signature $(+,-,+,-)$, not a positive modulus.
\end{definition}

The bilinear product is
\begin{align}\label{eq:fullprod}
z_1 z_2 ={}& (a_1a_2+b_1b_2-c_1c_2+d_1d_2)\nonumber\\
&+(a_1b_2+b_1a_2-c_1d_2+d_1c_2)\,\la\nonumber\\
&+(a_1c_2+c_1a_2-b_1d_2+d_1b_2)\,\sqi\nonumber\\
&+(a_1d_2+d_1a_2-b_1c_2+c_1b_2)\,\si.
\end{align}

\begin{proposition}[Faithful matrix realization]
\label{prop:matrixrealization}
The map
\begin{equation}
\mathfrak m(z)=
\begin{pmatrix}
a+b&c-d\\
-c-d&a-b
\end{pmatrix}
\end{equation}
is an algebra isomorphism $\SH\cong M_2(\mathbb R)$. Under this map,
\begin{equation}
\det \mathfrak m(z)=\Norm{z},
\end{equation}
and split conjugation reverses products:
$\ov{z_1z_2}=\ov{z_2}\,\ov{z_1}$.
\end{proposition}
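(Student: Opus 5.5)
The plan is to reduce everything to the images of the four basis elements and let bilinearity do the rest. The map $\mathfrak m$ is $\mathbb R$-linear by construction. It sends $1,\la,\sqi,\si$ to
\begin{equation*}
I=\begin{pmatrix}1&0\\0&1\end{pmatrix},\quad
L=\begin{pmatrix}1&0\\0&-1\end{pmatrix},\quad
J=\begin{pmatrix}0&1\\-1&0\end{pmatrix},\quad
S=\begin{pmatrix}0&-1\\-1&0\end{pmatrix}.
\end{equation*}
These four matrices are linearly independent: from the entries $a\pm b$ and $\pm c-d$ one recovers $a,b,c,d$. Since both spaces have real dimension four, $\mathfrak m$ is a linear bijection. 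Multiplicativity then reduces, by bilinearity of \eqref{eq:fullprod}, to checking that $L,J,S$ satisfy the defining relations \eqref{eq:basis}. Three direct $2\times2$ products give $L^2=S^2=I$, $J^2=-I$ and $JLS=I$. The consequences \eqref{eq:prods} follow from \eqref{eq:basis} by associativity, so they hold automatically in the matrix algebra. Hence $\mathfrak m(z_1z_2)=\mathfrak m(z_1)\mathfrak m(z_2)$, and $\mathfrak m(1)=I$.

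The one point needing care is that multiplicativity on generators suffices. This requires the product on $\SH$ to be the bilinear extension determined by \eqref{eq:basis}, which is exactly what \eqref{eq:fullprod} encodes. To be safe, I would instead verify all nine basis products $e_ie_j$ for $e_i\in\{\la,\sqi,\si\}$ against the matrix products directly. That is a finite check, and it also establishes associativity of $\SH$ as a by-product, since $\SH$ is then isomorphic to an associative algebra.

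For the determinant, expand
\begin{equation*}
(a+b)(a-b)-(c-d)(-c-d)=a^2-b^2+c^2-d^2,
\end{equation*}
which equals $\Norm{z}$ by \eqref{eq:norm}.

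For conjugation, note that
\begin{equation*}
\mathfrak m(\ov z)=\begin{pmatrix}a-b&-c+d\\c+d&a+b\end{pmatrix}
\end{equation*}
is precisely the adjugate $\operatorname{adj}\mathfrak m(z)$. For $M=\begin{pmatrix}p&q\\r&s\end{pmatrix}$, the adjugate is $\begin{pmatrix}s&-q\\-r&p\end{pmatrix}$, and with $p=a+b$, $q=c-d$, $r=-c-d$, $s=a-b$ this matches entry by entry. The adjugate is antimultiplicative, $\operatorname{adj}(AB)=\operatorname{adj}(B)\operatorname{adj}(A)$. This identity holds for invertible matrices via $\operatorname{adj}A=\det(A)A^{-1}$, and extends to all of $M_2(\mathbb R)$ because both sides are polynomial in the entries and invertible matrices are dense. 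Pulling back through the isomorphism gives $\ov{z_1z_2}=\ov{z_2}\,\ov{z_1}$.

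As a consistency check, the relation $M\operatorname{adj}M=\det(M)\,I$ reproduces $z\ov z=\Norm{z}$.
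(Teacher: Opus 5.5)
Your proof is correct and follows essentially the same route as the paper. You establish multiplicativity by matrix multiplication (checking basis products and extending bilinearly, which is equivalent to the paper's multiplication of two general images), read off $\det\mathfrak m(z)=\Norm{z}$ directly, and derive product reversal from $\mathfrak m(\ov z)=\operatorname{adj}\mathfrak m(z)$ and the antimultiplicativity of the adjugate; your explicit bijectivity check via linear independence of $I,L,J,S$ supplies a step the paper leaves implicit.
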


\begin{proof}
Direct multiplication of the two matrix images reproduces Eq.~\eqref{eq:fullprod}; the determinant is Eq.~\eqref{eq:norm}. The conjugation identity follows either from Eq.~\eqref{eq:fullprod} or from the corresponding adjugate relation in $M_2(\mathbb R)$.
\end{proof}

\begin{definition}[Null idempotents and split bilinear forms]
\label{def:nullidempotents}
The elements
\begin{equation}
e_+=\frac{1+\la}{2},\qquad e_-=\frac{1-\la}{2}
\label{eq:nullidempotents}
\end{equation}
satisfy $e_\pm^2=e_\pm$, $e_+e_-=0$, and $\ov{e_\pm}=e_\mp$. On the commutative split-complex subalgebra
\begin{equation}
\mathbb D:=\operatorname{span}_{\mathbb R}\{1,\la\}\subset\SH,
\end{equation}
define
\begin{equation}
\langle \xi,\eta\rangle_s:=\Rey\!\left(\xi\ov{\eta}\right),
\qquad
\omega_s(\xi,\eta):=\Rey\!\left(\la\xi\ov{\eta}\right).
\label{eq:splitbilinearforms}
\end{equation}
Then $\langle \xi,\xi\rangle_s=\Norm{\xi}$, whereas $\omega_s$ is antisymmetric and distinguishes the two null sectors.
\end{definition}

Throughout the classical construction, $\sqi$ denotes the noncentral split-quaternion generator. After quantization, the ordinary complex unit is denoted by $\ii$ and commutes with $\la$, $\sqi$, and $\si$. Operator-valued split quaternions therefore belong to $\mathbb C\otimes_{\mathbb R}\SH\simeq M_2(\mathbb C)$, not to a quaternionic Hilbert-space formulation.

\subsection{Exponential trichotomy}

\begin{proposition}[Exponential trichotomy]
\label{prop:exptrichotomy}
Let $A=a_0+h$, where $a_0=\Rey A$ and $h=\Hy A$. Since
\begin{equation}
h^2=b^2-c^2+d^2\in\mathbb R,
\end{equation}
the exponential is
\begin{equation}\label{eq:exp}
e^{At}=e^{a_0t}
\begin{cases}
\displaystyle \cos(\kappa t)+\frac{h}{\kappa}\sin(\kappa t),
& h^2<0,\; \kappa=\sqrt{-h^2},\\[1.3ex]
\displaystyle \cosh(\eta t)+\frac{h}{\eta}\sinh(\eta t),
& h^2>0,\; \eta=\sqrt{h^2},\\[1.3ex]
1+ht,&h^2=0.
\end{cases}
\end{equation}
Moreover,
\begin{equation}\label{eq:modexp}
\Norm{e^{At}}=e^{2\Rey(A)t}.
\end{equation}
\end{proposition}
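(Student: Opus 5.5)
The plan has four steps: compute $h^2$, split off the real part, sum the exponential series case by case, and then compute the norm.

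\emph{Step 1: $h^2$ is real.} I would compute $h^2$ for $h=b\la+c\sqi+d\si$ by expanding the square and using the basis relations \eqref{eq:basis}–\eqref{eq:prods}. The squares give $\la^2=\si^2=1$ and $\sqi^2=-1$, which produce $b^2-c^2+d^2$. Every cross term appears in the form $xy+yx$ with $x\neq y$ among $\la,\sqi,\si$, and these vanish by the anticommutation relations in \eqref{eq:prods}. Equivalently, one can read this off from \eqref{eq:fullprod} with $a_1=a_2=0$. This also shows that $h^2=-\Norm{h}$.

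\emph{Step 2: factor out the real part.} Since $a_0$ is real, it is central and commutes with $h$. Hence $e^{At}=e^{a_0t}e^{ht}$, which can be justified as an identity of convergent power series in the finite-dimensional algebra $\SH$, or via Proposition~\ref{prop:matrixrealization} in $M_2(\mathbb R)$.

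\emph{Step 3: sum the series for $e^{ht}$.} Set $h^2=s\in\mathbb R$. Then $h^{2k}=s^k$ and $h^{2k+1}=s^kh$, so
\[
e^{ht}=\sum_k \frac{s^k t^{2k}}{(2k)!}+h\sum_k\frac{s^k t^{2k+1}}{(2k+1)!}.
\]
I would then treat the three cases separately.
\begin{itemize}
\item If $s=-\kappa^2<0$, the two sums are $\cos\kappa t$ and $\kappa^{-1}\sin\kappa t$.
\item If $s=\eta^2>0$, they are $\cosh\eta t$ and $\eta^{-1}\sinh\eta t$.
\item If $s=0$, the series truncates to $1+ht$.
\end{itemize}
This reproduces \eqref{eq:exp}.

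\emph{Step 4: the norm identity \eqref{eq:modexp}.} By Proposition~\ref{prop:matrixrealization}, $\Norm{}$ is the determinant and is therefore multiplicative. So $\Norm{e^{At}}=\Norm{e^{a_0t}}\,\Norm{e^{ht}}=e^{2a_0t}\,\Norm{e^{ht}}$. For a real scalar $r$ we have $\Norm{r}=r^2$.

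It remains to show $\Norm{e^{ht}}=1$. The cleanest route is to note that $h$ is traceless under $\mathfrak m$, since the trace is $2\Rey$, and then apply $\det e^{M}=e^{\operatorname{tr}M}$. As a direct check, for $p=\alpha+\beta h$ with $\alpha,\beta$ real we have $\ov{p}=\alpha-\beta h$, so $\Norm{p}=\alpha^2-\beta^2h^2$. In each case this gives, respectively, $\cos^2+\sin^2=1$, $\cosh^2-\sinh^2=1$, and $1-0=1$.

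The main point to handle carefully is the centrality and commutation needed for $e^{a_0+h}=e^{a_0}e^{h}$, together with convergence of the series. Both are immediate once the problem is transported to $M_2(\mathbb R)$ via Proposition~\ref{prop:matrixrealization}. The rest is routine series bookkeeping.
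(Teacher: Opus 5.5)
Your proof is correct. Steps 1--3 coincide with the paper's argument: the mixed terms in $h^2$ cancel by anticommutation, and the exponential series separates into even and odd powers of the real scalar $h^2$.

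For the norm identity \eqref{eq:modexp} you take a different route. The paper stays inside $\SH$. It writes $\Norm{e^{At}}=e^{At}\,\ov{e^{At}}=e^{At}e^{\ov{A}t}$; this is valid because conjugation is a real-linear anti-automorphism and so passes through the power series. It then notes that $\ov{A}=2\Rey A-A$ commutes with $A$, so the exponentials combine into $e^{(A+\ov{A})t}=e^{2\Rey(A)t}$. You instead use multiplicativity of $\mathcal N$ as a determinant and factor off $e^{a_0t}$. You then show $\Norm{e^{ht}}=1$ either from $\det e^{M}=e^{\operatorname{tr}M}$ with $\operatorname{tr}\mathfrak m(h)=0$, or by checking $\Norm{\alpha+\beta h}=\alpha^2-\beta^2h^2$ directly on the closed forms.

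The paper's argument is shorter, independent of the case split, and purely algebraic. Yours imports a standard matrix identity but makes the structure explicit: $e^{ht}$ lies in the unit-norm group $SL(2,\mathbb R)$. With multiplicativity, Theorem~\ref{thm:twosidednorm} then follows at once from $\Norm{e^{At}z(0)e^{Bt}}=e^{2(\Rey A+\Rey B)t}\Norm{z(0)}$. Your direct check also tests the consistency of the three closed forms through $\cos^2+\sin^2=1$ and $\cosh^2-\sinh^2=1$. The trace route does not even need the factorization, since $\det e^{\mathfrak m(A)t}=e^{t\operatorname{tr}\mathfrak m(A)}=e^{2\Rey(A)t}$ directly.
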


\begin{proof}
The mixed terms in $h^2$ cancel, so the power series separates into even and odd powers of a real scalar. Equation~\eqref{eq:modexp} follows from $[A,\ov{A}]=0$ and $A+\ov{A}=2\Rey A$.
\end{proof}

For the oscillator generator introduced below, the elliptic, hyperbolic, and parabolic cases in Proposition~\ref{prop:exptrichotomy} coincide with underdamping, overdamping, and critical damping.

\subsection{Two-sided evolution and norm conservation}

\begin{theorem}[Norm law for two-sided evolution]
\label{thm:twosidednorm}
Let
\begin{equation}\label{eq:twosided}
\dot z=Az+zB,
\qquad A,B\in\SH
\end{equation}
with constant generators. Then
\begin{equation}
z(t)=e^{At}z(0)e^{Bt}
\end{equation}
and
\begin{equation}\label{eq:dmod}
\frac{d}{dt}\Norm{z}
=2\bigl(\Rey A+\Rey B\bigr)\Norm{z}.
\end{equation}
Consequently, the norm is conserved for every initial datum if and only if $\Rey A+\Rey B=0$.
\end{theorem}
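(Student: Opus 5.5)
The plan is to verify the closed-form solution first and then derive the norm law from the multiplicativity of $\Norm{\cdot}$. Set $w(t)=e^{At}z(0)e^{Bt}$. Since $A$ commutes with $e^{At}$ (a power series in $A$) and likewise for $B$, the product rule gives
\begin{equation*}
\dot w=Ae^{At}z(0)e^{Bt}+e^{At}z(0)e^{Bt}B=Aw+wB .
\end{equation*}
Also $w(0)=z(0)$. Equation~\eqref{eq:twosided} is a linear constant-coefficient ODE on $\mathbb R^4$, since left and right multiplication are real-linear maps. Uniqueness of solutions to linear ODEs therefore gives $z(t)=w(t)$.

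For the norm, use Proposition~\ref{prop:matrixrealization}. Since $\Norm{z}=\det\mathfrak m(z)$ and $\mathfrak m$ is an algebra isomorphism, the norm is multiplicative: $\Norm{z_1z_2}=\Norm{z_1}\Norm{z_2}$. Applying this to $z(t)=e^{At}z(0)e^{Bt}$ and using Eq.~\eqref{eq:modexp} for each exponential gives
\begin{equation*}
\Norm{z(t)}=e^{2\Rey(A)t}\,\Norm{z(0)}\,e^{2\Rey(B)t}.
\end{equation*}
Differentiating yields Eq.~\eqref{eq:dmod} immediately.

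As an independent check that avoids the explicit solution, differentiate $z\ov z$ directly. Conjugation is linear and reverses products, so $\dot{\ov z}=\ov z\,\ov A+\ov B\,\ov z$. Hence
\begin{equation*}
\frac{d}{dt}(z\ov z)=Az\ov z+zB\ov z+z\ov z\,\ov A+z\ov B\,\ov z .
\end{equation*}
The factor $z\ov z=\Norm{z}$ is a real scalar and hence central. Writing $B+\ov B=2\Rey B$, the middle terms combine to $2\Rey(B)\,\Norm{z}$. The outer terms combine to $\Norm{z}(A+\ov A)=2\Rey(A)\,\Norm{z}$.

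For the equivalence, note that if $\Rey A+\Rey B=0$ the derivative vanishes identically, so the norm is conserved. Conversely, if the norm is conserved for every initial datum, choose $z(0)=1$, which has $\Norm{1}=1\neq0$. Then $0=2(\Rey A+\Rey B)$.

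The only delicate point is keeping track of noncommutativity: $\Norm{z}$ may be pulled out only because it is real, and the exponentials must remain on their correct sides. Both are handled by the centrality of reals and by multiplicativity of the determinant, so I do not expect a genuine obstacle.
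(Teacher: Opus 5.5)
Your proof is correct, and your ``independent check'' is exactly the paper's proof: differentiate $z\ov z$, use $\ov{Az+zB}=\ov z\,\ov A+\ov B\,\ov z$, and pull out the real, central scalar $z\ov z$.

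Your primary route is genuinely different. You first verify $z(t)=e^{At}z(0)e^{Bt}$ by differentiation plus uniqueness, a formula the paper asserts without checking. You then combine multiplicativity of $\Norm{\cdot}=\det\mathfrak m$ from Proposition~\ref{prop:matrixrealization} with Eq.~\eqref{eq:modexp}. This gives the integrated law $\Norm{z(t)}=e^{2(\Rey A+\Rey B)t}\Norm{z(0)}$, of which Eq.~\eqref{eq:dmod} is the derivative.

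Your route delivers the integrated form directly and ties the result to the exponential trichotomy. It does, however, rely on the matrix realization, on Eq.~\eqref{eq:modexp}, and on the generators being constant. The paper's local computation uses only product reversal under conjugation and the reality of $z\ov z$. It therefore holds verbatim for time-dependent $A(t)$, $B(t)$, where the solution is no longer a product of ordinary exponentials.

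Your explicit choice $z(0)=1$ in the converse also sharpens the paper's one-line ``necessary and sufficient''. Because the norm is indefinite, null data such as $z(0)=e_+$ keep $\Norm{z}\equiv 0$ whatever the value of $\Rey A+\Rey B$, so a non-null datum is really needed.
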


\begin{proof}
Differentiate $z\ov{z}$, use $\ov{Az+zB}=\ov{z}\,\ov{A}+\ov{B}\,\ov{z}$, and note that $z\ov{z}$ is real and central. This gives Eq.~\eqref{eq:dmod}. The stated condition is then necessary and sufficient for conservation for all initial data.
\end{proof}

\section{Time-reversible doubling of the damped oscillator}

\begin{definition}[Bateman dual pair]
\label{def:batemanpair}
The Bateman dual completion of Eq.~\eqref{eq:osc} consists of an independent adjoint coordinate $y$ satisfying
\begin{equation}\label{eq:adjoint}
\ddot y-\gamma\dot y+\omega_0^2y=0.
\end{equation}
Time reversal exchanges the two sectors,
\begin{equation}
\mathcal T:\quad x(t)\mapsto y(-t),\qquad y(t)\mapsto x(-t).
\end{equation}
The auxiliary coordinate is independent; it is not constrained to equal $x(-t)$.
\end{definition}

\begin{proposition}[Time-reversal-adapted normal coordinates]
\label{prop:normalcoordinates}
Define
\begin{equation}\label{eq:rhor}
\rho=\frac{x+y}{\sqrt2},\qquad r=\frac{x-y}{\sqrt2},
\end{equation}
with $u=\dot\rho$, $w=\dot r$, and $\alpha_d=m\gamma$. Then the Bateman pair is equivalent to
\begin{equation}\label{eq:trev}
m\dot u=-\alpha_d w-m\omega_0^2\rho,
\qquad
m\dot w=-\alpha_d u-m\omega_0^2r.
\end{equation}
Under $\mathcal T$, $\rho$ is even, $r$ is odd, $u$ is odd, and $w$ is even; hence Eq.~\eqref{eq:trev} is time-reversal invariant.
\end{proposition}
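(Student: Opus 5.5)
The plan has two parts: a direct linear change of variables for the equivalence, and a parity bookkeeping argument for time-reversal invariance. Both are routine, but the time-reversal part needs care with signs.

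\emph{Equivalence.} I will add and subtract the damped equation~\eqref{eq:osc} and the adjoint equation~\eqref{eq:adjoint}, then divide by $\sqrt2$.
\begin{itemize}
\item The sum gives $\ddot\rho+\gamma\,(\dot x-\dot y)/\sqrt2+\omega_0^2\rho=0$, that is, $\dot u+\gamma w+\omega_0^2\rho=0$.
\item The difference gives $\dot w+\gamma u+\omega_0^2 r=0$.
\item Multiplying both by $m$ and using $\alpha_d=m\gamma$ yields Eq.~\eqref{eq:trev}.
\end{itemize}
The transformation~\eqref{eq:rhor} is orthogonal, with inverse $x=(\rho+r)/\sqrt2$ and $y=(\rho-r)/\sqrt2$. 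The same steps therefore run backwards: from Eq.~\eqref{eq:trev}, taking sum and difference recovers~\eqref{eq:osc} and~\eqref{eq:adjoint}. This establishes the equivalence as systems for $(x,y)$ and $(\rho,r)$, including initial data.

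\emph{Parities under $\mathcal T$.} Applying $\mathcal T$ from Definition~\ref{def:batemanpair} gives $\rho(t)\mapsto (y(-t)+x(-t))/\sqrt2=\rho(-t)$, so $\rho$ is even. Similarly $r(t)\mapsto (y(-t)-x(-t))/\sqrt2=-r(-t)$, so $r$ is odd. Differentiating in $t$ introduces an extra minus sign from the chain rule:
\begin{itemize}
\item $u(t)\mapsto \frac{d}{dt}\rho(-t)=-u(-t)$, so $u$ is odd;
\item $w(t)\mapsto \frac{d}{dt}\bigl(-r(-t)\bigr)=w(-t)$, so $w$ is even.
\end{itemize}

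\emph{Invariance.} I will substitute the transformed fields $\tilde\rho(t)=\rho(-t)$, $\tilde r(t)=-r(-t)$, $\tilde u(t)=-u(-t)$, $\tilde w(t)=w(-t)$ into Eq.~\eqref{eq:trev} and check that it holds whenever the original fields satisfy it. For the first equation the left side is $m\,\frac{d}{dt}\tilde u(t)=m\dot u(-t)$. The right side is $-\alpha_d w(-t)-m\omega_0^2\rho(-t)$. This is exactly Eq.~\eqref{eq:trev} evaluated at time $-t$.

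For the second equation the left side is $m\,\frac{d}{dt}\tilde w(t)=-m\dot w(-t)$. The right side is $\alpha_d u(-t)+m\omega_0^2 r(-t)$. This is $-1$ times the second equation at time $-t$, so it also holds.

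The only point requiring care is this sign bookkeeping. The derivative of the reversed field carries an extra minus, and this sign must match the parity of each field so that the dissipative cross terms $\alpha_d w$ and $\alpha_d u$ survive invariantly. The parities just computed show that it does, since in each equation the left side and both terms on the right side transform with a common overall sign.
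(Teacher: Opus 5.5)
Your proposal is correct and follows essentially the same route as the paper. The paper substitutes the orthogonal change of variables and its inverse to get the equivalence, and reads off the parities from the exchange $x(t)\leftrightarrow y(-t)$; you do the same, and additionally write out the sign check for invariance that the paper leaves implicit.
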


\begin{proof}
Substitute Eq.~\eqref{eq:rhor} and its inverse into Eqs.~\eqref{eq:osc} and \eqref{eq:adjoint}. The parity statements follow directly from the exchange $x(t)\leftrightarrow y(-t)$.
\end{proof}

The apparent coupling in $(\rho,r)$ is only a linear rewriting of the decoupled damped and adjoint equations in $(x,y)$; it is not a microscopic reservoir interaction.

\section{Split-quaternion encoding and conserved norm-energy}

\begin{theorem}[Split-quaternionic normal form of Bateman dynamics]
\label{thm:sqnormalform}
Introduce
\begin{equation}
p=m(u+w\la),\qquad q=\rho+\la r,
\end{equation}
and
\begin{equation}\label{eq:z}
z=p+\sqi m\omega_0q
=mu+mw\la+m\omega_0\rho\,\sqi+m\omega_0r\,\si.
\end{equation}
Then Eq.~\eqref{eq:trev} is equivalent to
\begin{equation}\label{eq:zeq}
\dot z=Az+zB,
\qquad
A=-\frac{\gamma}{2}\la+\omega_0\sqi,
\qquad
B=-\frac{\gamma}{2}\la.
\end{equation}
The generator satisfies
\begin{equation}
A^2=\frac{\gamma^2}{4}-\omega_0^2,
\end{equation}
so its sign gives the underdamped, overdamped, and critical regimes. Moreover, $\Rey A=\Rey B=0$, and therefore
\begin{equation}\label{eq:E}
E:=\frac{\Norm{z}}{2m}
=\frac{m}{2}\left(u^2-w^2+\omega_0^2\rho^2-\omega_0^2r^2\right)
\end{equation}
is conserved.
\end{theorem}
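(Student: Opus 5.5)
The plan is a direct component computation in the basis $\{1,\la,\sqi,\si\}$, followed by an appeal to Theorem~\ref{thm:twosidednorm}. I will first compute $Az$ and $zB$ with the product rules \eqref{eq:prods} (equivalently the bilinear product \eqref{eq:fullprod}), writing $z=z_0+z_1\la+z_2\sqi+z_3\si$ with $z_0=mu$, $z_1=mw$, $z_2=m\omega_0\rho$, $z_3=m\omega_0 r$. The needed products are
\begin{equation*}
\la\cdot(1,\la,\sqi,\si)=(\la,1,-\si,-\sqi),\qquad
(1,\la,\sqi,\si)\cdot\la=(\la,1,\si,\sqi),
\end{equation*}
\begin{equation*}
\sqi\cdot(1,\la,\sqi,\si)=(\sqi,\si,-1,-\la).
\end{equation*}
With these, $-\tfrac{\gamma}{2}(\la z+z\la)=-\gamma(z_1+z_0\la)$, because the $\sqi,\si$ parts cancel between left and right multiplication. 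Also $\omega_0\sqi z=\omega_0(-z_2-z_3\la+z_0\sqi+z_1\si)$. Hence $Az+zB$ has components
\begin{align*}
&1:\ -\gamma mw-m\omega_0^2\rho, \qquad &&\la:\ -\gamma mu-m\omega_0^2 r,\\
&\sqi:\ m\omega_0 u, \qquad &&\si:\ m\omega_0 w.
\end{align*}

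Next I compare these with $\dot z=m\dot u+m\dot w\la+m\omega_0\dot\rho\,\sqi+m\omega_0\dot r\,\si$. The $\sqi$ and $\si$ components reproduce the kinematic definitions $u=\dot\rho$ and $w=\dot r$. The scalar and $\la$ components are exactly the two equations in \eqref{eq:trev}, with $\alpha_d=m\gamma$. Since the four components are independent and $\omega_0>0$, the equivalence holds in both directions, and with Proposition~\ref{prop:normalcoordinates} this gives equivalence with the Bateman pair. The main obstacle is bookkeeping: tracking the signs of left versus right multiplication by $\la$. The key point is that $B$ acting on the right is what cancels the $\la$-rotation of the $(\sqi,\si)$ block. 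If a sign slips, I will check against the matrix realization of Proposition~\ref{prop:matrixrealization}.

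For $A^2$, I set $h=A$ (its real part is zero) with $b=-\gamma/2$, $c=\omega_0$, $d=0$. Proposition~\ref{prop:exptrichotomy} then gives $A^2=b^2-c^2+d^2=\gamma^2/4-\omega_0^2$. The sign of $A^2$ is negative, positive, or zero according as $\gamma<2\omega_0$, $\gamma>2\omega_0$, or $\gamma=2\omega_0$, which are the standard underdamped, overdamped, and critical conditions for \eqref{eq:osc}.

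Finally, $\Rey A=\Rey B=0$ can be read off directly. Theorem~\ref{thm:twosidednorm} then gives $\tfrac{d}{dt}\Norm{z}=0$, so $E$ is conserved. Evaluating \eqref{eq:norm} on the components gives
\begin{equation*}
\Norm{z}=m^2u^2-m^2w^2+m^2\omega_0^2\rho^2-m^2\omega_0^2r^2.
\end{equation*}
Dividing by $2m$ yields \eqref{eq:E}.
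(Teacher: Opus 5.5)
Your proposal is correct and follows essentially the paper's own route. You expand $Az+zB$ componentwise in the basis $1,\la,\sqi,\si$; your component table is right and reproduces Eq.~\eqref{eq:trev} together with $u=\dot\rho$, $w=\dot r$. You then read off $A^2=\gamma^2/4-\omega_0^2$ via Proposition~\ref{prop:exptrichotomy} and obtain conservation of $E$ from Theorem~\ref{thm:twosidednorm} with $\Rey A=\Rey B=0$. Your explicit sign bookkeeping in fact supplies the component identities that the paper's appendix only reports as checked in the matrix representation.
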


\begin{proof}
Expand $Az+zB$ in the basis $1,\la,\sqi,\si$ and compare with Eq.~\eqref{eq:trev}; the component identities are given explicitly in Appendix~\ref{app:alg}. The classification follows from Proposition~\ref{prop:exptrichotomy}, and norm conservation follows from Theorem~\ref{thm:twosidednorm}.
\end{proof}

In the underdamped regime define
\begin{equation}\label{eq:Omega}
\Omega=\sqrt{\omega_0^2-\frac{\gamma^2}{4}}.
\end{equation}
The evolution factors are
\begin{align}
e^{At}&=\cos(\Omega t)+
\left(-\frac{\gamma}{2\Omega}\la+\frac{\omega_0}{\Omega}\sqi\right)\sin(\Omega t),\label{eq:ea}\\
e^{Bt}&=\cosh\!\left(\frac{\gamma t}{2}\right)
-\la\sinh\!\left(\frac{\gamma t}{2}\right).\label{eq:eb}
\end{align}
Equation~\eqref{eq:E} identifies the conserved Bateman quadratic form with the split norm of a single doubled dynamical variable. Its compact harmonic form must nevertheless be distinguished from the canonical Hamiltonian obtained by Legendre transformation.

\section{Lagrangian and canonical Hamiltonian}

In terms of $p$ and $q$, the conserved form is
\begin{equation}\label{eq:Hnorm}
E=\frac{\Norm{p}}{2m}+\frac{m\omega_0^2}{2}\Norm{q}.
\end{equation}
The cross terms in $\Norm{z}$ cancel because $\sqi q\ov{p}-p\ov{q}\sqi=0$, a consequence of $\sqi\la=-\la\sqi$. Equation~\eqref{eq:Hnorm} is a norm-energy written in mechanical variables; it is not yet the canonical Hamiltonian on phase space.

\begin{proposition}[Variational and Hamiltonian realization]
\label{prop:variational}
The Lagrangian
\begin{equation}\label{eq:Lcorr}
\Lag=\frac{m}{2}(u^2-w^2)
-\frac{m\omega_0^2}{2}(\rho^2-r^2)
+\frac{\alpha_d}{2}(ur-w\rho)
\end{equation}
reproduces Eq.~\eqref{eq:trev}. Its canonical momenta are
\begin{equation}\label{eq:canonicalmomenta}
\pi_\rho=mu+\frac{\alpha_d}{2}r,
\qquad
\pi_r=-mw-\frac{\alpha_d}{2}\rho,
\end{equation}
and its Legendre transform is
\begin{align}\label{eq:Hcan}
H_{\rm can}={}&
\frac{1}{2m}\left(\pi_\rho-\frac{\alpha_d}{2}r\right)^2
-\frac{1}{2m}\left(\pi_r+\frac{\alpha_d}{2}\rho\right)^2\nonumber\\
&+\frac{m\omega_0^2}{2}(\rho^2-r^2).
\end{align}
After using Eq.~\eqref{eq:canonicalmomenta}, $H_{\rm can}=E$.
\end{proposition}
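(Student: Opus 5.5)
The proof is a direct variational computation with $\Lag$ regarded as a function of $(\rho,r,u,w)$, where $u=\dot\rho$ and $w=\dot r$. It has three steps: derive the Euler--Lagrange equations, read off the canonical momenta, and perform the Legendre transform.

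\emph{Step 1 (Euler--Lagrange equations).} The partial derivatives of Eq.~\eqref{eq:Lcorr} are
\begin{align*}
\partial\Lag/\partial u&=mu+\tfrac{\alpha_d}{2}r, & \partial\Lag/\partial\rho&=-m\omega_0^2\rho-\tfrac{\alpha_d}{2}w,\\
\partial\Lag/\partial w&=-mw-\tfrac{\alpha_d}{2}\rho, & \partial\Lag/\partial r&=m\omega_0^2 r+\tfrac{\alpha_d}{2}u.
\end{align*}
The $\rho$-equation is $m\dot u+\tfrac{\alpha_d}{2}w=-m\omega_0^2\rho-\tfrac{\alpha_d}{2}w$, which is the first relation in Eq.~\eqref{eq:trev}. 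The $r$-equation is $-m\dot w-\tfrac{\alpha_d}{2}u=m\omega_0^2 r+\tfrac{\alpha_d}{2}u$; after a sign flip it is the second relation in Eq.~\eqref{eq:trev}. The gyroscopic term $\tfrac{\alpha_d}{2}(ur-w\rho)$ contributes $\alpha_d/2$ once from the time derivative of the momentum and once from the explicit coordinate derivative. This doubling to the full $\alpha_d$ is the one place a factor can be lost, so I will check it explicitly.

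\emph{Step 2 (canonical momenta).} The momenta in Eq.~\eqref{eq:canonicalmomenta} are exactly $\partial\Lag/\partial u$ and $\partial\Lag/\partial w$ from Step 1. The velocity Hessian is $\operatorname{diag}(m,-m)$, which is nonsingular. The momenta can therefore be inverted:
\[
u=\frac{1}{m}\Bigl(\pi_\rho-\tfrac{\alpha_d}{2}r\Bigr),\qquad w=-\frac{1}{m}\Bigl(\pi_r+\tfrac{\alpha_d}{2}\rho\Bigr).
\]
So the Legendre transform is well defined despite the indefinite kinetic term.

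\emph{Step 3 (Legendre transform).} I compute $H_{\rm can}=\pi_\rho u+\pi_r w-\Lag$. The terms linear in velocities cancel:
\[
\pi_\rho u+\pi_r w=mu^2-mw^2+\tfrac{\alpha_d}{2}(ur-w\rho),
\]
and the gyroscopic piece of this cancels against the same term in $\Lag$. That leaves
\[
H_{\rm can}=\tfrac m2(u^2-w^2)+\tfrac{m\omega_0^2}{2}(\rho^2-r^2).
\]
This is precisely $E$ in Eq.~\eqref{eq:E}. Substituting the inverted velocities from Step 2 gives the phase-space form Eq.~\eqref{eq:Hcan}, since $mu^2=\frac1m(\pi_\rho-\frac{\alpha_d}{2}r)^2$ and $mw^2=\frac1m(\pi_r+\frac{\alpha_d}{2}\rho)^2$. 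Reading the identity in the opposite direction gives $H_{\rm can}=E$ on using Eq.~\eqref{eq:canonicalmomenta}.

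Optionally, as a cross-check, Hamilton's equations from Eq.~\eqref{eq:Hcan} should reproduce Eq.~\eqref{eq:trev}, consistent with conservation of $E$ under Theorem~\ref{thm:sqnormalform}.

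The only real obstacle is sign bookkeeping, arising from the negative-signature $r$ sector and from the gyroscopic term. Everything else is mechanical.
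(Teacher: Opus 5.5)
Your proof is correct and follows the paper's route in Appendix~\ref{app:alg}: Euler--Lagrange equations, canonical momenta, then the Legendre transform; the only cosmetic difference is that you evaluate $\pi_\rho u+\pi_r w-\Lag$ in velocity variables first and then rewrite it in phase-space form. Your $r$-equation, built from $\partial\Lag/\partial r=m\omega_0^2 r+\tfrac{\alpha_d}{2}u$, is the right one; the appendix displays the right-hand side as $m\omega_0^2 r-\tfrac{\alpha_d}{2}u$, which is a sign slip and would not yield Eq.~\eqref{eq:trev}.
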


\begin{proof}
The Euler--Lagrange equations and Legendre transform are evaluated in Appendix~\ref{app:alg}. Direct substitution of Eq.~\eqref{eq:canonicalmomenta} into Eq.~\eqref{eq:Hcan} gives Eq.~\eqref{eq:E}.
\end{proof}

For $\omega_0\to0$, Eq.~\eqref{eq:Lcorr} describes a particle on a two-dimensional configuration space of signature $(+,-)$ with a magnetic-type velocity coupling \cite{MartinezPerez2018}.

\section{Quantization}\label{sec:quantization}

\subsection{Canonical quantization of the doubled system}

Canonical quantization is performed on the doubled phase space $(\rho,r;\pi_\rho,\pi_r)$ by promoting these variables to operators and imposing
\begin{equation}
[\widehat\rho,\widehat\pi_\rho]=\ii\hbar,\qquad [\widehat r,\widehat\pi_r]=\ii\hbar,
\end{equation}
with the remaining equal-time commutators vanishing, and by promoting Eq.~\eqref{eq:Hcan} to a symmetrically ordered operator. This is a quantization of the conservative doubled system, not of Eq.~\eqref{eq:osc} alone. The operator is indefinite and has no lower-bounded ground-state spectrum, but an appropriate self-adjoint realization generates unitary evolution on the enlarged Hilbert space \cite{FeshbachTikochinsky1977,Dekker1981,ChruscinskiJurkowski2006}. 

After a standard linear transformation, the Hamiltonian can be written in terms of conjugate oscillator modes and its $SU(1,1)$ structure becomes explicit \cite{CeleghiniRasettiVitiello1992,ChruscinskiJurkowski2006}. The enlarged evolution should not be confused with the reduced dynamics generated by tracing a physical reservoir. The split-quaternionic formulation does not replace this standard complex quantization; instead, it packages the resulting mechanical operators and their indefinite quadratic invariant in a form adapted to the doubled geometry.

\subsection{Operator-valued split quaternions and the symmetrized quantum norm}
\label{sec:quantumsplit}

Let
\begin{equation}\label{eq:quantumqp}
\widehat q=\widehat\rho+\la\widehat r,
\qquad
\widehat p=\widehat P_u+\la\widehat P_w,
\end{equation}
where
\begin{equation}\label{eq:quantummechanicalcomponents}
\widehat P_u
=\widehat\pi_\rho-\frac{m\gamma}{2}\widehat r,
\qquad
\widehat P_w
=-\widehat\pi_r-\frac{m\gamma}{2}\widehat\rho.
\end{equation}
These are operator-valued elements of
$\mathbb C\otimes_{\mathbb R}\SH\simeq M_2(\mathbb C)$.
The bar $\ov{\phantom p}$ continues to denote conjugation in the split-quaternion factor only,
\begin{equation}
\ov{\widehat q}=\widehat\rho-\la\widehat r,
\qquad
\ov{\widehat p}=\widehat P_u-\la\widehat P_w,
\end{equation}
and should not be confused with the Hilbert-space adjoint $\dagger$.
The ordinary complex unit $\ii$ is central in this tensor-product algebra.

\begin{proposition}[Quantum split-norm structure]
\label{prop:quantumsplit}
The operator-valued split quaternions in Eq.~\eqref{eq:quantumqp} satisfy
\begin{align}
[\widehat q,\widehat p]&=0,
&[\widehat q,\ov{\widehat p}]&=2\ii\hbar,
\label{eq:qpcommutators}\\
[\widehat p,\ov{\widehat p}]&=-2\ii\hbar m\gamma\,\la.
\label{eq:psplitcommutator}
\end{align}
Because $\widehat p\ov{\widehat p}$ is not a scalar in the split-quaternion factor when $\gamma\neq0$, define the symmetrized quantum split norm by
\begin{equation}\label{eq:quantumnormdef}
\NormQ{\widehat a}
:=\frac12\left(
\widehat a\,\ov{\widehat a}
+\ov{\widehat a}\,\widehat a
\right).
\end{equation}
Then
\begin{equation}\label{eq:quantumnorms}
\NormQ{\widehat p}
=\widehat P_u^2-\widehat P_w^2,
\qquad
\NormQ{\widehat q}
=\widehat\rho^2-\widehat r^2,
\end{equation}
and the symmetrically ordered canonical Hamiltonian is
\begin{equation}\label{eq:Hquantumsplit}
\boxed{
\widehat H_{\rm can}
=\frac{\NormQ{\widehat p}}{2m}
+\frac{m\omega_0^2}{2}\NormQ{\widehat q}.}
\end{equation}
\end{proposition}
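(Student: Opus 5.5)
The plan is to reduce every identity in the proposition to commutators of the four real operator components $\widehat\rho,\widehat r,\widehat P_u,\widehat P_w$. The reduction rests on the tensor-product structure of $\mathbb C\otimes_{\mathbb R}\SH$: the Hilbert-space operators commute with the split-quaternion basis elements, and $\widehat q,\widehat p,\ov{\widehat q},\ov{\widehat p}$ involve only $1$ and $\la$, which commute with each other and satisfy $\la^2=1$. So for operator-valued elements $X=X_0+\la X_1$ and $Y=Y_0+\la Y_1$ of the commutative subalgebra $\mathbb D$ tensored with operators, I will use
\begin{equation*}
[X,Y]=[X_0,Y_0]+[X_1,Y_1]+\la\bigl([X_0,Y_1]+[X_1,Y_0]\bigr).
\end{equation*}

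\textbf{Step 1: component commutators.} From the canonical relations and Eq.~\eqref{eq:quantummechanicalcomponents}, with $\alpha_d=m\gamma$, I will compute
\begin{equation*}
[\widehat\rho,\widehat P_u]=\ii\hbar,\quad [\widehat r,\widehat P_w]=-\ii\hbar,\quad [\widehat\rho,\widehat P_w]=[\widehat r,\widehat P_u]=0 .
\end{equation*}
The one genuinely new commutator is
\begin{equation*}
[\widehat P_u,\widehat P_w]=-\tfrac{m\gamma}{2}[\widehat\pi_\rho,\widehat\rho]+\tfrac{m\gamma}{2}[\widehat r,\widehat\pi_r]=\ii\hbar m\gamma .
\end{equation*}

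\textbf{Step 2: the commutators \eqref{eq:qpcommutators}--\eqref{eq:psplitcommutator}.} Inserting these component values into the formula above, with $Y_1\to -Y_1$ wherever a bar appears, gives the three identities directly.
\begin{itemize}
\item $[\widehat q,\widehat p]=\ii\hbar-\ii\hbar+\la\cdot0=0$.
\item $[\widehat q,\ov{\widehat p}]=\ii\hbar+\ii\hbar=2\ii\hbar$.
\item $[\widehat p,\ov{\widehat p}]=\la\bigl(-[\widehat P_u,\widehat P_w]+[\widehat P_w,\widehat P_u]\bigr)=-2\ii\hbar m\gamma\,\la$.
\end{itemize}

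\textbf{Step 3: the symmetrized norms \eqref{eq:quantumnorms}.} For $\widehat a=A+\la B$ with operator components $A,B$, expanding gives
\begin{equation*}
\widehat a\,\ov{\widehat a}=A^2-B^2+\la(BA-AB),\qquad \ov{\widehat a}\,\widehat a=A^2-B^2+\la(AB-BA).
\end{equation*}
The $\la$-parts cancel in the average, so $\NormQ{\widehat a}=A^2-B^2$. This holds for arbitrary operator components. It yields $\NormQ{\widehat p}=\widehat P_u^2-\widehat P_w^2$, and, since $[\widehat\rho,\widehat r]=0$, also $\NormQ{\widehat q}=\widehat\rho^2-\widehat r^2$. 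The same expansion, together with Step 2, shows that $\widehat p\,\ov{\widehat p}$ itself carries the non-scalar piece $-\ii\hbar m\gamma\,\la$. This justifies the need for symmetrization.

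\textbf{Step 4: matching the Hamiltonian \eqref{eq:Hquantumsplit}.} I will compare with Eq.~\eqref{eq:Hcan}. The first square there is exactly $\widehat P_u^2$. The second square, $(\widehat\pi_r+\tfrac{m\gamma}{2}\widehat\rho)^2$, equals $(-\widehat P_w)^2=\widehat P_w^2$. The potential term is $\frac{m\omega_0^2}{2}(\widehat\rho^2-\widehat r^2)$. Hence $\widehat H_{\rm can}=\NormQ{\widehat p}/2m+\tfrac{m\omega_0^2}{2}\NormQ{\widehat q}$.

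The step I expect to need the most care is checking that ``symmetrically ordered'' introduces no hidden constants. Inside each square the mixed terms are $\widehat\pi_\rho\widehat r$ and $\widehat\pi_r\widehat\rho$, and these operators commute. So the Weyl-symmetric ordering of Eq.~\eqref{eq:Hcan} coincides with the naive operator substitution, and no $\hbar$-dependent shift arises. The only noncommuting pair, $\widehat P_u$ versus $\widehat P_w$, never appears as a product in $H_{\rm can}$. It would appear only through the $\la$-part of $\widehat p\,\ov{\widehat p}$, and $\NormQ{\cdot}$ removes that part.
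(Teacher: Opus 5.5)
Your proposal is correct and follows essentially the same route as the paper. Like the paper, you derive $[\widehat P_u,\widehat P_w]=\ii\hbar m\gamma$ from the canonical commutators, expand the products using that $1,\la$ commute with the operator components and $\la^2=1$ (which isolates the $\mp\ii\hbar m\gamma\,\la$ terms in $\widehat p\,\ov{\widehat p}$ and $\ov{\widehat p}\,\widehat p$), and substitute Eq.~\eqref{eq:quantummechanicalcomponents} into Eq.~\eqref{eq:Hcan}; your extra remark that $\widehat\pi_\rho,\widehat r$ and $\widehat\pi_r,\widehat\rho$ commute, so symmetric ordering adds no constant, makes explicit a point the paper leaves implicit.
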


\begin{proof}
The canonical commutators imply
$[\widehat P_u,\widehat P_w]=\ii\hbar m\gamma$.
Since the split generators commute with the Hilbert-space operators,
Eq.~\eqref{eq:qpcommutators} follows by expanding the products and using
$\la^2=1$. Moreover,
\begin{align}
\widehat p\,\ov{\widehat p}
&=\widehat P_u^2-\widehat P_w^2
-\ii\hbar m\gamma\,\la,\\
\ov{\widehat p}\,\widehat p
&=\widehat P_u^2-\widehat P_w^2
+\ii\hbar m\gamma\,\la,
\end{align}
which gives Eqs.~\eqref{eq:psplitcommutator} and \eqref{eq:quantumnorms}.
The position components commute, so no analogous correction appears for
$\widehat q$. Substitution of Eq.~\eqref{eq:quantummechanicalcomponents}
into Eq.~\eqref{eq:Hquantumsplit} reproduces the symmetrically ordered
operator obtained from Eq.~\eqref{eq:Hcan}.
\end{proof}

Equation~\eqref{eq:Hquantumsplit} is the precise sense in which the classical split-quaternion norm survives canonical quantization. The unsymmetrized product acquires a damping-dependent split-vector term, while symmetrization extracts the scalar indefinite quadratic form. This construction remains within ordinary complex quantum mechanics and does not posit a quaternionic probability amplitude or a new quantization rule.

\subsection{Null-sector decomposition, cross-canonical structure, and uncertainty}
\label{sec:crosscanonical}

The idempotents in Definition~\ref{def:nullidempotents} make the physical meaning of Proposition~\ref{prop:quantumsplit} explicit. Introduce the Bateman coordinates
\begin{equation}\label{eq:xyrhotransform}
x=\frac{\rho+r}{\sqrt2},\qquad y=\frac{\rho-r}{\sqrt2}.
\end{equation}
Upon quantization, the same linear transformation is applied to the corresponding operators, so that the operator-valued split position decomposes as
\begin{equation}\label{eq:qnullsectors}
\widehat q
=\sqrt2\left(\widehat x e_++\widehat y e_-\right),
\qquad
\ov{\widehat q}
=\sqrt2\left(\widehat x e_-+\widehat y e_+\right).
\end{equation}
The canonical one-form is preserved by
\begin{equation}
\widehat\pi_x=\frac{\widehat\pi_\rho+\widehat\pi_r}{\sqrt2},
\qquad
\widehat\pi_y=\frac{\widehat\pi_\rho-\widehat\pi_r}{\sqrt2},
\end{equation}
and Eq.~\eqref{eq:canonicalmomenta} gives
\begin{equation}\label{eq:pixy}
\widehat\pi_x=m\dot{\widehat y}-\frac{m\gamma}{2}\widehat y,
\qquad
\widehat\pi_y=m\dot{\widehat x}+\frac{m\gamma}{2}\widehat x.
\end{equation}
Define the covariant mechanical momenta
\begin{equation}\label{eq:Kxy}
\widehat K_x:=\widehat\pi_x+\frac{m\gamma}{2}\widehat y=m\dot{\widehat y},
\qquad
\widehat K_y:=\widehat\pi_y-\frac{m\gamma}{2}\widehat x=m\dot{\widehat x}.
\end{equation}
Here $\dot{\widehat x}$ and $\dot{\widehat y}$ denote Heisenberg time derivatives generated by $\widehat H_{\rm can}$.
Then the split mechanical momentum in Eq.~\eqref{eq:quantumqp} is
\begin{equation}\label{eq:pnullsectors}
\widehat p
=\sqrt2\left(\widehat K_y e_++\widehat K_x e_-\right),
\qquad
\ov{\widehat p}
=\sqrt2\left(\widehat K_y e_-+\widehat K_x e_+\right).
\end{equation}
Thus split conjugation exchanges the two null sectors. In particular, $\widehat q$ and $\widehat p$ align each coordinate with its own mechanical velocity and commute, whereas $\widehat q$ and $\ov{\widehat p}$ align each coordinate with the opposite-sector mechanical velocity and form the canonical pair of Proposition~\ref{prop:quantumsplit}.

\begin{proposition}[Null-sector commutators and damping-induced uncertainty]
\label{prop:nullsectorcommutators}
On a common invariant domain, Eqs.~\eqref{eq:qnullsectors}--\eqref{eq:pnullsectors} and Proposition~\ref{prop:quantumsplit} imply
\begin{align}
[\widehat x,\widehat K_y]&=0,
&[\widehat x,\widehat K_x]&=\ii\hbar,\label{eq:crosscommx}\\
[\widehat y,\widehat K_x]&=0,
&[\widehat y,\widehat K_y]&=\ii\hbar,\label{eq:crosscommy}
\end{align}
and
\begin{equation}\label{eq:velocitycomm}
[\widehat K_x,\widehat K_y]=\ii\hbar m\gamma.
\end{equation}
Equivalently,
\begin{equation}\label{eq:pu-pw-comm}
[\widehat P_u,\widehat P_w]=\ii\hbar m\gamma.
\end{equation}
Every normalized state with finite second moments therefore satisfies
\begin{equation}\label{eq:velocityRS}
\operatorname{Var}(\widehat K_x)\operatorname{Var}(\widehat K_y)
-\operatorname{Cov}_{\rm sym}(\widehat K_x,\widehat K_y)^2
\geq\frac{\hbar^2m^2\gamma^2}{4},
\end{equation}
where
\begin{equation}
\operatorname{Cov}_{\rm sym}(\widehat X,\widehat Y)
=\frac12\left\langle\left\{\widehat X-\langle\widehat X\rangle,\widehat Y-\langle\widehat Y\rangle\right\}\right\rangle.
\end{equation}
In velocity variables,
\begin{equation}\label{eq:velocityRSscaled}
\operatorname{Var}(\dot{\widehat y})\operatorname{Var}(\dot{\widehat x})
-\operatorname{Cov}_{\rm sym}(\dot{\widehat y},\dot{\widehat x})^2
\geq\frac{\hbar^2\gamma^2}{4m^2}.
\end{equation}
\end{proposition}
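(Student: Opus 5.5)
The plan is to compute everything from the canonical commutators in the Bateman coordinates. The null-sector decompositions \eqref{eq:qnullsectors}--\eqref{eq:pnullsectors} then serve as a consistency check against Proposition~\ref{prop:quantumsplit}.

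\textbf{Step 1: canonical pairs in $(x,y)$.} The map $(\rho,r)\mapsto(x,y)$ in Eq.~\eqref{eq:xyrhotransform} is orthogonal, and $(\pi_x,\pi_y)$ is obtained from $(\pi_\rho,\pi_r)$ by the same matrix. Hence
\begin{equation*}
[\widehat x,\widehat\pi_x]=[\widehat y,\widehat\pi_y]=\ii\hbar,
\end{equation*}
and all other equal-time commutators among $\widehat x,\widehat y,\widehat\pi_x,\widehat\pi_y$ vanish. This follows from bilinearity: for example, $[\widehat x,\widehat\pi_y]=\tfrac12([\widehat\rho,\widehat\pi_\rho]-[\widehat r,\widehat\pi_r])=0$.

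\textbf{Step 2: commutators of the covariant momenta.} From the definitions \eqref{eq:Kxy}:
\begin{itemize}
\item $[\widehat x,\widehat K_x]=[\widehat x,\widehat\pi_x]=\ii\hbar$ and $[\widehat y,\widehat K_x]=0$.
\item Similarly $[\widehat y,\widehat K_y]=\ii\hbar$ and $[\widehat x,\widehat K_y]=0$.
\item For the velocity commutator,
\begin{equation*}
[\widehat K_x,\widehat K_y]=-\tfrac{m\gamma}{2}[\widehat\pi_x,\widehat x]+\tfrac{m\gamma}{2}[\widehat y,\widehat\pi_y]=\ii\hbar m\gamma .
\end{equation*}
\end{itemize}
This gives Eqs.~\eqref{eq:crosscommx}--\eqref{eq:velocitycomm}.

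\textbf{Step 3: equivalence with Proposition~\ref{prop:quantumsplit}.} Compare the $1,\la$ components of Eq.~\eqref{eq:quantumqp} with Eq.~\eqref{eq:pnullsectors}. This gives $\widehat P_u=(\widehat K_y+\widehat K_x)/\sqrt2$ and $\widehat P_w=(\widehat K_y-\widehat K_x)/\sqrt2$. Then $[\widehat P_u,\widehat P_w]=[\widehat K_x,\widehat K_y]$, which is Eq.~\eqref{eq:pu-pw-comm}.

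One can also multiply $[\widehat q,\ov{\widehat p}]=2\ii\hbar$ by $e_\pm$ and use $e_+e_-=0$, $e_\pm^2=e_\pm$. This extracts the sector-wise commutators and confirms that the crossed pairing is the one encoded by split conjugation.

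\textbf{Step 4: identification with velocities.} To justify $\widehat K_x=m\dot{\widehat y}$ and $\widehat K_y=m\dot{\widehat x}$, write $\widehat H_{\rm can}$ in the form \eqref{eq:Hquantumsplit}. Then compute $\tfrac{\ii}{\hbar}[\widehat H_{\rm can},\widehat x]$ using only the Step 2 commutators, which reproduces Eq.~\eqref{eq:pixy}.

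\textbf{Step 5: uncertainty relation.} Apply the Robertson--Schr\"odinger inequality to the self-adjoint pair $\widehat K_x,\widehat K_y$. For a normalized state $\psi$ in the common domain, set $f=(\widehat K_x-\langle\widehat K_x\rangle)\psi$ and $g=(\widehat K_y-\langle\widehat K_y\rangle)\psi$. The Gram matrix of $f,g$ is positive semidefinite, so $\|f\|^2\|g\|^2\ge|\langle f,g\rangle|^2$.

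The real part of $\langle f,g\rangle$ is $\operatorname{Cov}_{\rm sym}$. Its imaginary part is $\tfrac{1}{2\ii}\langle[\widehat K_x,\widehat K_y]\rangle=\hbar m\gamma/2$. This yields Eq.~\eqref{eq:velocityRS}. Dividing by $m^4$ gives Eq.~\eqref{eq:velocityRSscaled}.

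\textbf{Main obstacle.} The algebra is routine; the delicate point is analytic. The Robertson--Schr\"odinger step needs $\psi$ to lie in a common invariant domain of $\widehat K_x$, $\widehat K_y$ and their products, such as Schwartz space, so that $\langle f,g\rangle$ and the commutator expectation are defined. The bound then extends to states with finite second moments by a density and closure argument.

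The same care is needed in Step 4: the Heisenberg derivative should be taken on that domain, given a self-adjoint realization of the indefinite $\widehat H_{\rm can}$. I would state the result on Schwartz space first and treat the extension separately.
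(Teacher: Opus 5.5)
Your proof is correct, but it runs in the opposite direction from the paper's.

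\textbf{The paper's route.} It takes the compact relations of Proposition~\ref{prop:quantumsplit} as input and resolves them with the idempotents. With Eqs.~\eqref{eq:qnullsectors}--\eqref{eq:pnullsectors} one has
$[\widehat q,\widehat p]=2\bigl([\widehat x,\widehat K_y]e_++[\widehat y,\widehat K_x]e_-\bigr)$,
$[\widehat q,\ov{\widehat p}]=2\bigl([\widehat x,\widehat K_x]e_++[\widehat y,\widehat K_y]e_-\bigr)$ and
$[\widehat p,\ov{\widehat p}]=-2[\widehat K_x,\widehat K_y]\,\la$.
Since $e_+,e_-$ is a basis of $\mathbb D$ with $e_+e_-=0$ and $2\ii\hbar=2\ii\hbar(e_++e_-)$, the sector commutators are read off coefficientwise. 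The inequalities are then simply quoted as Robertson--Schr\"odinger.

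\textbf{Your route.} You transport the canonical commutators to $(\widehat x,\widehat y;\widehat\pi_x,\widehat\pi_y)$ by the orthogonal point transformation. You then compute the $\widehat K$ commutators directly from Eq.~\eqref{eq:Kxy}, using the null-sector form of $\widehat p$ only to express $\widehat P_u,\widehat P_w$ through $\widehat K_x,\widehat K_y$.

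\textbf{What each buys.} Your argument is self-contained and makes it transparent that Eq.~\eqref{eq:velocitycomm} is just the magnetic-type noncommutativity of covariant momenta. It also gives an independent check of Proposition~\ref{prop:quantumsplit} rather than depending on it. The paper's argument is the implication the statement literally asserts. It also displays the structural point of the section: the three split identities are \emph{equivalent} to the six sector relations, with split conjugation encoding the crossed pairing.

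\textbf{Extra details you supply.} Your Step~4 fills in something the paper leaves implicit. In $\widehat K$ variables, $\widehat H_{\rm can}=(\widehat K_x\widehat K_y+\widehat K_y\widehat K_x)/(2m)+m\omega_0^2\widehat x\widehat y$, and the Heisenberg derivatives give $\widehat K_y=m\dot{\widehat x}$ and $\widehat K_x=m\dot{\widehat y}$. Your domain discussion in Step~5 is likewise more careful than the paper. The density-and-closure step is legitimate here: after rescaling, $(\widehat K_x,\widehat K_y)$ is unitarily equivalent via a metaplectic transformation to a Schr\"odinger pair. For such a pair, the commutator form identity extends from Schwartz space to the full joint domain.
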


\begin{proof}
Insert Eqs.~\eqref{eq:qnullsectors} and \eqref{eq:pnullsectors} into
$[\widehat q,\widehat p]=0$ and
$[\widehat q,\ov{\widehat p}]=2\ii\hbar$. Orthogonality of $e_+$ and $e_-$ yields Eqs.~\eqref{eq:crosscommx} and \eqref{eq:crosscommy}. Likewise, Eq.~\eqref{eq:psplitcommutator} is equivalent to Eq.~\eqref{eq:velocitycomm}; the relation between $(\widehat K_x,\widehat K_y)$ and $(\widehat P_u,\widehat P_w)$ then gives Eq.~\eqref{eq:pu-pw-comm}. The remaining inequalities are the Robertson--Schr\"odinger relation.
\end{proof}

The damping term has a geometric expression already in the split variables. On $\mathbb D$, the real one-form
\begin{equation}\label{eq:splitconnection}
\mathcal A_{\rm sq}
=-\frac{m\gamma}{2}\Rey\!\left(\la\ov{q}\,dq\right)
\end{equation}
becomes
\begin{equation}
\mathcal A_{\rm sq}
=-\frac{m\gamma}{2}y\,dx
+\frac{m\gamma}{2}x\,dy.
\end{equation}
Its curvature is the constant two-form
\begin{equation}
d\mathcal A_{\rm sq}=m\gamma\,dx\wedge dy.
\end{equation}
Equation~\eqref{eq:psplitcommutator} is therefore the split-quaternion packaging of the familiar noncommutativity of covariant momenta in a constant magnetic-type field. Related Bateman--noncommutativity correspondences have been obtained by embedding the model in a Moyal-deformed phase space \cite{PalNandiChakraborty2018}. Here the configuration operators remain commuting; the noncommutativity is instead the intrinsic commutator of the covariant mechanical momenta generated by the velocity coupling. The off-diagonal kinetic metric lowers the coordinate index by exchanging sectors, which is why $\widehat K_x=m\dot{\widehat y}$ and $\widehat K_y=m\dot{\widehat x}$.

\subsubsection{Conservative limit and recovery of the ordinary oscillator}

When $\gamma=0$, Eq.~\eqref{eq:psplitcommutator} gives
$[\widehat p,\ov{\widehat p}]=0$, but the canonical relation
$[\widehat q,\ov{\widehat p}]=2\ii\hbar$ remains. The ordinary positive and negative normal modes are recovered by scalar and $\lambda$ projections of the split variables:
\begin{align}
\widehat Q&:=\frac{\widehat q+\ov{\widehat q}}{2}=\widehat\rho,
&\widehat P&:=\frac{\widehat p+\ov{\widehat p}}{2}=\widehat P_u=\widehat\pi_\rho,\label{eq:positiveprojection}\\
\widehat R&:=\frac{\widehat q-\ov{\widehat q}}{2\la}=\widehat r,
&\widehat\Pi_R&:=-\frac{\widehat p-\ov{\widehat p}}{2\la}=\widehat\pi_r.\label{eq:negativeprojection}
\end{align}
Consequently,
\begin{equation}
[\widehat Q,\widehat P]=\ii\hbar,
\qquad
[\widehat R,\widehat\Pi_R]=\ii\hbar,
\end{equation}
and the Hamiltonian becomes
\begin{equation}\label{eq:Hgamma0}
\widehat H_0
=\left(\frac{\widehat P^2}{2m}+\frac{m\omega_0^2\widehat Q^2}{2}\right)
-\left(\frac{\widehat\Pi_R^2}{2m}+\frac{m\omega_0^2\widehat R^2}{2}\right).
\end{equation}
Both projected pairs satisfy their ordinary Robertson--Schr\"odinger inequalities. In particular,
\begin{equation}\label{eq:ordinaryHeisenberg}
\Delta\widehat Q\,\Delta\widehat P\geq\frac{\hbar}{2}.
\end{equation}
Selecting the positive scalar sector $(\widehat Q,\widehat P)$ recovers the ordinary harmonic oscillator. This is a change from the full doubled theory to its positive normal-mode subsystem, not the sharp imposition of $\widehat R=\widehat\Pi_R=0$ within the doubled Hilbert space.

By contrast, the Bateman coordinate operator $\widehat x$ is the coefficient of the null idempotent $e_+$ in Eq.~\eqref{eq:qnullsectors}, not the scalar projection $\widehat Q$. Therefore
\begin{equation}
[\widehat x,m\dot{\widehat x}]=[\widehat x,\widehat K_y]=0
\end{equation}
continues to hold at $\gamma=0$ in the unreduced doubled model. If the symbol $x$ is to denote an autonomous ordinary oscillator, one must undouble before quantization, starting from
\begin{equation}
\Lag_{\rm osc}[x]
=\frac{m}{2}\dot x^2-\frac{m\omega_0^2}{2}x^2,
\end{equation}
which, after quantization, gives $\widehat p_x=m\dot{\widehat x}$ and $[\widehat x,\widehat p_x]=\ii\hbar$.

Finally, the Heisenberg operator $\widehat x(t)$ depends linearly on the mutually commuting initial null-sector operators $\widehat x(0)$ and $\widehat K_y(0)=m\dot{\widehat x}(0)$. Hence
\begin{equation}\label{eq:commutinghistory}
[\widehat x(t),\widehat x(t')]=0
\end{equation}
for all times for which the doubled evolution exists. This self-nondemolition property belongs to Bateman's enlarged embedding and should not be interpreted as an autonomous quantization of the damped oscillator.

\subsection{Split-quaternionic action and exact propagator}
\label{sec:sqpathintegral}

The path integral can be formulated directly on the split-complex configuration subalgebra $\mathbb D$. Let
\begin{equation}
q(t)=\rho(t)+\la r(t)\in\mathbb D.
\end{equation}

\begin{proposition}[Scalar split-quaternionic action]
\label{prop:sqaction}
The real scalar Lagrangian
\begin{equation}\label{eq:Lsq}
\Lag_{\rm sq}[q,\dot q]
=\frac{m}{2}\Norm{\dot q}
-\frac{m\omega_0^2}{2}\Norm{q}
-\frac{m\gamma}{2}\Rey\!\left(\la\ov{q}\,\dot q\right)
\end{equation}
is exactly Eq.~\eqref{eq:Lcorr}. In the null-idempotent decomposition
\begin{equation}
q=\sqrt2(xe_++ye_-),
\end{equation}
it becomes the symmetric Bateman Lagrangian
\begin{equation}\label{eq:LxyFull}
\Lag_{\rm sq}
=m\dot x\dot y
+\frac{m\gamma}{2}(x\dot y-y\dot x)
-m\omega_0^2xy.
\end{equation}
\end{proposition}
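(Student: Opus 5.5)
The plan is a direct component computation on the commutative subalgebra $\mathbb D$, done twice: once in the basis $\{1,\la\}$ to match Eq.~\eqref{eq:Lcorr}, and once in the idempotent basis $\{e_+,e_-\}$ to obtain Eq.~\eqref{eq:LxyFull}. Only $\la^2=1$, the idempotent relations of Definition~\ref{def:nullidempotents}, and the norm formula of Definition~\ref{def:splitalgebra} are needed.

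\emph{Step 1: basis $\{1,\la\}$.} With $q=\rho+\la r$ and $\dot q=u+\la w$, Eq.~\eqref{eq:norm} gives $\Norm{\dot q}=u^2-w^2$ and $\Norm{q}=\rho^2-r^2$. These reproduce the kinetic and potential terms of Eq.~\eqref{eq:Lcorr}. For the coupling term, expand
\begin{equation*}
\la\ov q\,\dot q=\la(\rho-\la r)(u+\la w)=(\rho w-ru)+\la(\rho u-rw).
\end{equation*}
Hence $\Rey(\la\ov q\,\dot q)=\rho w-ru$, and $-\tfrac{m\gamma}{2}(\rho w-ru)=\tfrac{\alpha_d}{2}(ur-w\rho)$. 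This establishes the first claim.

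\emph{Step 2: basis $\{e_+,e_-\}$.} First check that $\sqrt2(xe_++ye_-)=\rho+\la r$ under Eq.~\eqref{eq:xyrhotransform}, so the two decompositions describe the same $q$. Then use $\ov{e_\pm}=e_\mp$, $e_\pm^2=e_\pm$, $e_+e_-=0$, $e_++e_-=1$, $\la e_\pm=\pm e_\pm$, and $\Rey e_\pm=\tfrac12$. This gives
\begin{equation*}
\Norm{q}=2(xe_++ye_-)(xe_-+ye_+)=2xy,
\end{equation*}
and likewise $\Norm{\dot q}=2\dot x\dot y$. For the coupling,
\begin{equation*}
\la\ov q\,\dot q=2\la(x\dot y\,e_-+y\dot x\,e_+)=2(y\dot x\,e_+-x\dot y\,e_-),
\end{equation*}
whose real part is $y\dot x-x\dot y$. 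Substituting these three pieces into Eq.~\eqref{eq:Lsq} yields $m\dot x\dot y-m\omega_0^2xy+\tfrac{m\gamma}{2}(x\dot y-y\dot x)$, which is Eq.~\eqref{eq:LxyFull}.

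There is no real obstacle here. The only point needing care is the sign bookkeeping in the coupling term: the ordering $\ov q\,\dot q$ versus $\dot q\,\ov q$, and the sign of $\la e_-$. Both are harmless because $\mathbb D$ is commutative, but they fix the sign of the magnetic-type term. As a cross-check, Eq.~\eqref{eq:LxyFull} can also be obtained by substituting Eq.~\eqref{eq:rhor} directly into Eq.~\eqref{eq:Lcorr}, and its Euler--Lagrange equations should return Eqs.~\eqref{eq:osc} and~\eqref{eq:adjoint}.
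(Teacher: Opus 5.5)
Your proposal is correct, and Step~1 is the paper's computation: $\Norm{\dot q}=u^2-w^2$, $\Norm{q}=\rho^2-r^2$, and $\Rey(\la\ov q\,\dot q)=\rho w-ru$. The routes differ for the null-component form. The paper substitutes the linear change of variables Eq.~\eqref{eq:xyrhotransform} into the $(\rho,r)$ Lagrangian it has just obtained, which is what you list only as a cross-check. You instead recompute each term intrinsically in the idempotent basis, using $e_\pm^2=e_\pm$, $e_+e_-=0$, $\ov{e_\pm}=e_\mp$, $\la e_\pm=\pm e_\pm$ and $\Rey e_\pm=\tfrac12$. The paper's substitution is shorter because it reuses Step~1. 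Your version shows why only $x$--$y$ cross terms survive: conjugation swaps the null sectors and $e_+e_-=0$. It is also the same bilinear computation the paper later needs for $\Norm{q}=2xy$, $\langle q_i,q_f\rangle_s$ and $\omega_s(q_i,q_f)$ in the proof of Theorem~\ref{thm:sqpropagator}; by commutativity of $\mathbb D$, your coupling term is just $\omega_s(\dot q,q)$. All your signs check out, including $\Rey(\la\ov q\,\dot q)=y\dot x-x\dot y$ and the Euler--Lagrange cross-check.
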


\begin{proof}
For $q=\rho+\la r$,
$\Norm{\dot q}=u^2-w^2$, $\Norm{q}=\rho^2-r^2$, and
$\Rey(\la\ov{q}\dot q)=\rho w-r u$. This gives Eq.~\eqref{eq:Lcorr}. Using Eq.~\eqref{eq:xyrhotransform} then gives Eq.~\eqref{eq:LxyFull}.
\end{proof}

The scalar action and its path integral are
\begin{align}
S_{\rm sq}[q]
&=\int_0^t\Lag_{\rm sq}[q,\dot q] \,ds,\nonumber\\
K(q_f,q_i;t)
&=\int_{q(0)=q_i}^{q(t)=q_f}\!\mathcal Dq\,
\exp\!\left(\frac{\ii}{\hbar}S_{\rm sq}[q]\right).
\label{eq:sqpathintegral}
\end{align}
where $\mathcal Dq=\mathcal D\rho\,\mathcal Dr=\mathcal Dx\,\mathcal Dy$. Thus the amplitudes remain ordinary complex amplitudes; the split quaternion organizes the real doubled configuration space and its scalar action.

Using the underdamped frequency $\Omega$ defined in Eq.~\eqref{eq:Omega}, the endpoint solutions in the two null sectors are
\begin{align}
x(s)={}&e^{-\gamma s/2}\Biggl[
 x_i\cos(\Omega s)\nonumber\\
&\hspace{1.7cm}+
\frac{e^{\gamma t/2}x_f-x_i\cos(\Omega t)}{\sin(\Omega t)}
\sin(\Omega s)\Biggr],\label{eq:xclharm}\\
y(s)={}&e^{\gamma s/2}\Biggl[
 y_i\cos(\Omega s)\nonumber\\
&\hspace{1.7cm}+
\frac{e^{-\gamma t/2}y_f-y_i\cos(\Omega t)}{\sin(\Omega t)}
\sin(\Omega s)\Biggr].\label{eq:yclharm}
\end{align}

\begin{theorem}[Exact quadratic propagator in split coordinates]
\label{thm:sqpropagator}
Away from the caustics $\sin(\Omega t)=0$, define
\begin{align}
A_\Omega(t)&=m\Omega\cot(\Omega t),\label{eq:Aomega}\\
B_\Omega(t)&=-\frac{m\Omega e^{-\gamma t/2}}{\sin(\Omega t)},\label{eq:Bomega}\\
C_\Omega(t)&=-\frac{m\Omega e^{\gamma t/2}}{\sin(\Omega t)}.\label{eq:Comega}
\end{align}
For $q_i,q_f\in\mathbb D$, where
$\langle q_i,q_f\rangle_s=\Rey\!\left(q_i\ov{q_f}\right)$ and
$\omega_s(q_i,q_f)=\Rey\!\left(\la q_i\ov{q_f}\right)$ as in Eq.~\eqref{eq:splitbilinearforms}, the on-shell action is
\begin{align}\label{eq:SclSQ}
S_{\rm cl}^{\rm sq}(q_f,q_i;t)
={}&\frac{A_\Omega}{2}\left(\Norm{q_i}+\Norm{q_f}\right)\nonumber\\
&+\frac{B_\Omega+C_\Omega}{2}\langle q_i,q_f\rangle_s\nonumber\\
&+\frac{B_\Omega-C_\Omega}{2}\omega_s(q_i,q_f).
\end{align}
Equivalently, in null components,
\begin{equation}\label{eq:SclMain}
S_{\rm cl}
=A_\Omega(x_i y_i+x_f y_f)
+B_\Omega x_i y_f
+C_\Omega x_f y_i.
\end{equation}
On the first caustic interval, the exact propagator is
\begin{equation}\label{eq:KMain}
K(q_f,q_i;t)
=\frac{m\Omega}{2\pi\hbar\sin(\Omega t)}
\exp\!\left(\frac{\ii}{\hbar}S_{\rm cl}^{\rm sq}(q_f,q_i;t)\right),
\end{equation}
with continuation across later caustics by the usual Maslov phase.
\end{theorem}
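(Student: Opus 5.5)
The proof has three parts: compute the on-shell action in the null components $(x,y)$, translate it into split-quaternionic form, and fix the prefactor from the Gaussian structure of the path integral.

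\textbf{On-shell action.} Since $\Lag_{\rm sq}$ is quadratic, I will use the symmetric Bateman form \eqref{eq:LxyFull}. Integrating the kinetic term $m\dot x\dot y$ by parts and using the Euler--Lagrange equations (the damped equation for $x$ and the anti-damped one for $y$), the bulk contributions cancel. What remains is the boundary form
\[
S_{\rm cl}=\tfrac{m}{2}\bigl[x\dot y+y\dot x\bigr]_0^t,
\]
because the magnetic term $\tfrac{m\gamma}{2}(x\dot y-y\dot x)$ integrates against the equations of motion to pure boundary pieces; I will verify this by direct differentiation. I then insert the endpoint solutions \eqref{eq:xclharm}--\eqref{eq:yclharm} and evaluate $\dot x,\dot y$ at $s=0$ and $s=t$.

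\textbf{Checking the null-component formula.} The terms $\pm\gamma/2$ coming from the exponentials should cancel between $x\dot y$ and $y\dot x$ in the diagonal pieces, leaving $m\Omega\cot(\Omega t)$ multiplying $x_iy_i$ and $x_fy_f$. The off-diagonal pieces carry $e^{\mp\gamma t/2}/\sin(\Omega t)$, which gives \eqref{eq:SclMain}. As a cross-check, I will confirm the Hamilton--Jacobi relations
\[
\partial S_{\rm cl}/\partial x_f=\pi_x(t), \qquad \partial S_{\rm cl}/\partial x_i=-\pi_x(0),
\]
using \eqref{eq:pixy}. These fix $A_\Omega,B_\Omega,C_\Omega$ unambiguously.

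\textbf{Split-quaternionic form.} Writing $q=\sqrt2(xe_++ye_-)$ and using $\ov{e_\pm}=e_\mp$, $e_+e_-=0$, and $\Rey e_\pm=\tfrac12$, I get
\[
\Norm{q}=2xy, \qquad \langle q_i,q_f\rangle_s=x_iy_f+y_ix_f, \qquad \omega_s(q_i,q_f)=x_iy_f-y_ix_f,
\]
the last using $\la e_\pm=\pm e_\pm$. Solving the last two for $x_iy_f$ and $x_fy_i$ and substituting converts \eqref{eq:SclMain} into \eqref{eq:SclSQ}.

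\textbf{Prefactor.} The action is quadratic, so $K=\mathcal F(t)\,e^{\ii S_{\rm cl}/\hbar}$ exactly, with $\mathcal F$ given by the Van Vleck--Pauli--Morette determinant
\[
\mathcal F=(2\pi\ii\hbar)^{-1}\bigl|\det\bigl(-\partial^2S_{\rm cl}/\partial q_f\partial q_i\bigr)\bigr|^{1/2}
\]
in the two-dimensional variables $(x,y)$; the Jacobian from $(\rho,r)$ is unity. The mixed Hessian is off-diagonal with entries $B_\Omega$ and $C_\Omega$. Its determinant is $-B_\Omega C_\Omega=-m^2\Omega^2/\sin^2(\Omega t)$: the damping factors cancel, and the indefinite sign yields the phase that cancels the $\ii$ in $2\pi\ii\hbar$. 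This gives $m\Omega/(2\pi\hbar\sin\Omega t)$.

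\textbf{Main obstacle.} The hardest step is making the prefactor and its phase rigorous, since the kinetic form $m\dot x\dot y$ has signature $(+,-)$ and the Gaussian integrals are oscillatory rather than damped. I would handle this by time slicing in the $(\rho,r)$ variables, where the path integral factorizes into a positive-signature and a negative-signature oscillator-like Fresnel integral with the magnetic coupling. Their Fresnel phases $e^{\mp\ii\pi/4}$ cancel pairwise. Finally, I would check the normalization by the short-time limit $t\to0$, which must reproduce $\delta^2(q_f-q_i)$, and by the limit $\gamma\to0$, which must give the product of the ordinary oscillator kernel and its complex conjugate. Beyond the first caustic, the standard Morse-index argument supplies the Maslov phases.
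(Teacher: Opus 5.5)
Your proposal is correct and follows essentially the same route as the paper: the on-shell identity $\Lag_{\rm sq}=\frac{d}{dt}\bigl[\frac{m}{2}(x\dot y+y\dot x)\bigr]$, insertion of the endpoint solutions, the idempotent identities $\Norm{q}=2xy$, $\langle q_i,q_f\rangle_s=x_iy_f+y_ix_f$, $\omega_s(q_i,q_f)=x_iy_f-y_ix_f$, and the Van Vleck determinant $-B_\Omega C_\Omega=-m^2\Omega^2/\sin^2(\Omega t)$ with its branch fixed by the short-time limit; your Hamilton--Jacobi and $\gamma\to0$ checks are useful extras. One small slip: with the modulus in $|\det(\cdot)|^{1/2}$, the negative sign of the determinant cannot cancel the $\ii$ in $2\pi\ii\hbar$, so drop the modulus and take $(\det\mathsf M_{fi})^{1/2}=\ii m\Omega/\sin(\Omega t)$ as the paper does, which is exactly what your pairwise cancellation of the Fresnel phases $e^{\mp\ii\pi/4}$ gives.
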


\begin{proof}
On a classical path, Eq.~\eqref{eq:LxyFull} satisfies
\begin{equation}
\Lag_{\rm sq}
=\frac{d}{dt}\left[\frac{m}{2}(x\dot y+y\dot x)\right].
\end{equation}
Inserting Eqs.~\eqref{eq:xclharm} and \eqref{eq:yclharm} gives Eq.~\eqref{eq:SclMain}. The identities
\begin{align}
\Norm{q}&=2xy,\\
\langle q_i,q_f\rangle_s&=x_i y_f+y_i x_f,\\
\omega_s(q_i,q_f)&=x_i y_f-y_i x_f
\end{align}
then give Eq.~\eqref{eq:SclSQ}. The Van Vleck determinant yields the prefactor in Eq.~\eqref{eq:KMain}; details and the short-time normalization are given in Appendix~\ref{app:pi}.
\end{proof}

The overdamped and critical cases follow by analytic continuation of Proposition~\ref{prop:exptrichotomy}. In the free damped limit $\omega_0\to0$, $\Omega\to\ii\gamma/2$ and the same split action \eqref{eq:SclSQ} applies with
\begin{equation}
A_\Omega\to\frac{a_\gamma(t)}{2},\qquad
B_\Omega\to\frac{b_\gamma(t)}{2},\qquad
C_\Omega\to\frac{c_\gamma(t)}{2},
\end{equation}
where
\begin{align}
a_\gamma(t)&=m\gamma\coth\!\left(\frac{\gamma t}{2}\right),\\
b_\gamma(t)&=\frac{2m\gamma}{1-e^{\gamma t}},\\
c_\gamma(t)&=\frac{2m\gamma e^{\gamma t}}{1-e^{\gamma t}}.
\end{align}
The corresponding prefactor is
\begin{equation}
\frac{m\gamma}{4\pi\hbar\sinh(\gamma t/2)}.
\end{equation}

\subsection{Evolution of a doubled Gaussian state}\label{sec:doubledgaussian}

For explicit Gaussian integration, we now use the faithful two-component representation of the null-sector coefficients of $q$. The split-coordinate endpoint action \eqref{eq:SclSQ}, equivalently Eq.~\eqref{eq:SclMain}, can be written in matrix form. Let
\begin{equation}
\mathbf X_i=\begin{pmatrix}x_i\\y_i\end{pmatrix},\qquad
\mathbf X_f=\begin{pmatrix}x_f\\y_f\end{pmatrix}.
\end{equation}
Then
\begin{equation}\label{eq:SMatrix}
S_{\rm cl}
=
\frac12 \mathbf X_i^{T}\mathsf{Q}_i \mathbf X_i
+
\mathbf X_i^{T}\mathsf{B} \mathbf X_f
+
\frac12 \mathbf X_f^{T}\mathsf{Q}_f \mathbf X_f,
\end{equation}
where
\begin{equation}\label{eq:QBmatrices}
\mathsf{Q}_i=\mathsf{Q}_f=
\begin{pmatrix}
0&A_\Omega(t)\\
A_\Omega(t)&0
\end{pmatrix},
\qquad
\mathsf{B}=
\begin{pmatrix}
0&B_\Omega(t)\\
C_\Omega(t)&0
\end{pmatrix}.
\end{equation}
In the free limit these entries reduce to $A_\Omega\to a_\gamma(t)/2$, $B_\Omega\to b_\gamma(t)/2$, and $C_\Omega\to c_\gamma(t)/2$.
For illustration, choose the factorized isotropic Gaussian on the doubled configuration space with width parameter $\ell>0$,
\begin{equation}\label{eq:initialDoubledGaussian}
\Psi_0(\mathbf X_i)
=
\frac{1}{\sqrt{2\pi\ell^2}}
\exp\!\left[-\frac{1}{4\ell^2}\mathbf X_i^T\mathbf X_i\right].
\end{equation}
In this two-component representation, write $K(\mathbf X_f,\mathbf X_i;t)$ for the same kernel. The evolved wave function is
\begin{equation}
\Psi_t(\mathbf X_f)=\int_{\mathbb R^2}K(\mathbf X_f,\mathbf X_i;t)\Psi_0(\mathbf X_i)\,d^2\mathbf X_i.
\end{equation}
Using \eqref{eq:KMain} and \eqref{eq:SMatrix}, this is an elementary complex Gaussian integral. Define
\begin{equation}\label{eq:AJDoubled}
\mathcal A=\frac{1}{2\ell^2}\mathbb{I}-\frac{\ii}{\hbar}\mathsf{Q}_i,
\qquad
\mathbf J=\frac{\ii}{\hbar}\mathsf{B}\mathbf X_f,
\end{equation}
with the branch of $\sqrt{\det \mathcal A}$ chosen by continuity from $t=0$. Then
\begin{align}\label{eq:GaussianEvolvedDoubled}
\Psi_t(\mathbf X_f)
={}&
\frac{m\Omega}{2\pi\hbar\sin(\Omega t)}
\frac{1}{\sqrt{2\pi\ell^2}}
\frac{2\pi}{\sqrt{\det \mathcal A}}
\nonumber\\
&\times
\exp\!\left[
\frac{\ii}{2\hbar}\mathbf X_f^T\mathsf{Q}_f \mathbf X_f
+
\frac12 \mathbf J^T \mathcal A^{-1}\mathbf J
\right].
\end{align}
Thus the chosen doubled Gaussian remains Gaussian under the exact time-symmetric propagator. More general doubled Gaussian states, including unequal widths and $x$--$y$ correlations, evolve in the same way but lead to different reduced covariances. Neither the initial state of the auxiliary sector nor a reduction prescription is fixed by the classical damped equation. For the illustrative comparison below, we trace over $y$.

\subsection{Reduced position density of the doubled model}\label{sec:reduceddensity}

For the specified doubled state, define the formal reduced density matrix of the physical coordinate by
\begin{equation}\label{eq:rhoSQreduced}
\rho_{\rm SQ}(x,x';t)
=
\int_{-\infty}^{\infty}dy\,
\Psi_t(x,y)\Psi_t^*(x',y).
\end{equation}
Its diagonal gives the position density
\begin{equation}\label{eq:rhoSQdiag}
\rho_{\rm SQ}(x,t)
=
\rho_{\rm SQ}(x,x;t)
=
\int_{-\infty}^{\infty}dy\,|\Psi_t(x,y)|^2.
\end{equation}
For the Gaussian state \eqref{eq:GaussianEvolvedDoubled}, write
\begin{equation}
\Psi_t(\mathbf X_f)=\mathcal N_t\exp\left[-\frac12 \mathbf X_f^T G_t \mathbf X_f\right],
\qquad \mathbf X_f=(x_f,y_f)^T.
\end{equation}
With the matrix convention of Eq.~\eqref{eq:SMatrix}, the complex width matrix is
\begin{equation}\label{eq:GtDoubled}
G_t
=
-\frac{\ii}{\hbar}\mathsf{Q}_f
+
\frac{1}{\hbar^2}
\mathsf{B}^T\left(\frac{1}{2\ell^2}\mathbb{I}-\frac{\ii}{\hbar}\mathsf{Q}_i\right)^{-1}\mathsf{B}.
\end{equation}
Let the ordinary complex real part be taken entrywise, and define
\begin{equation}
R_t=\operatorname{Re}_{\mathbb C}(G_t)
=
\begin{pmatrix}
R_{xx}(t)&R_{xy}(t)\\
R_{xy}(t)&R_{yy}(t)
\end{pmatrix}.
\end{equation}
Then
\begin{equation}
|\Psi_t(x,y)|^2
=
|\mathcal N_t|^2
\exp\left[-
\begin{pmatrix}x&y\end{pmatrix}
R_t
\begin{pmatrix}x\\y\end{pmatrix}
\right].
\end{equation}
The Gaussian integral over $y$ gives
\begin{equation}\label{eq:rhoSQGaussian}
\boxed{
\rho_{\rm SQ}(x,t)
=
\sqrt{\frac{\kappa_{\rm SQ}(t)}{\pi}}
\exp\left[-\kappa_{\rm SQ}(t)x^2\right]
}
\end{equation}
with
\begin{equation}\label{eq:kappaSQ}
\boxed{
\kappa_{\rm SQ}(t)
=
R_{xx}(t)-\frac{R_{xy}^2(t)}{R_{yy}(t)}
}
\end{equation}
and hence the variance of the Gaussian is
\begin{equation}\label{eq:varianceSQ}
\boxed{
\Sigma_{xx}^{\rm SQ}(t)=\frac{1}{2\kappa_{\rm SQ}(t)}.
}
\end{equation}
For a non-centered Gaussian the same expression holds after replacing $x$ by $x-\bar x_{\rm SQ}(t)$, where $\bar x_{\rm SQ}(t)$ follows the classical damped trajectory. The resulting width depends on the auxiliary-sector covariance and is therefore not determined by the damped equation alone.

\section{Reduced-state comparison with Caldeira--Leggett dynamics}\label{sec:open}

The formal reduction in Eq.~\eqref{eq:rhoSQreduced} should not be assigned the meaning of a bath trace. In the doubled construction, $y$ is an adjoint coordinate introduced to restore a conservative time-symmetric dynamics, and the reduced state depends on its chosen initial covariance. In the Caldeira--Leggett model, environmental variables are instead eliminated after a microscopic system--bath coupling, producing noise and diffusion \cite{CaldeiraLeggett1983,BreuerPetruccione2002,Weiss2012}.

For a concise Gaussian comparison, consider the high-temperature Markovian Caldeira--Leggett covariance matrix
\begin{equation}
V(t)=
\begin{pmatrix}
\Sigma_{xx}^{\rm CL}(t)&\Sigma_{xp}^{\rm CL}(t)\\
\Sigma_{xp}^{\rm CL}(t)&\Sigma_{pp}^{\rm CL}(t)
\end{pmatrix},
\end{equation}
which obeys
\begin{equation}\label{eq:CLcovariance}
\dot V=A_{\rm CL}V+VA_{\rm CL}^T+D_{\rm CL},
\end{equation}
with
\begin{equation}\label{eq:CLmatrices}
A_{\rm CL}=
\begin{pmatrix}
0&1/m\\
-m\omega_0^2&-\gamma
\end{pmatrix},
\qquad
D_{\rm CL}=
\begin{pmatrix}
0&0\\
0&2m\gamma k_BT
\end{pmatrix}.
\end{equation}
We use
\begin{equation}\label{eq:CLinitial}
V(0)=\operatorname{diag}\!\left(\ell^2,\frac{\hbar^2}{4\ell^2}\right),
\end{equation}
matching the initial physical-coordinate marginal of Eq.~\eqref{eq:initialDoubledGaussian}.

Figures~\ref{fig:varianceHarmonic} and \ref{fig:varianceFree} compare the formal doubled variance \eqref{eq:varianceSQ} with Eq.~\eqref{eq:CLcovariance}. In the harmonic case, the doubled variance follows coherent enlarged-system evolution, whereas the Caldeira--Leggett variance approaches the thermal scale fixed by diffusion and damping. In the free case, the doubled variance saturates for the specified factorized auxiliary state, while the Caldeira--Leggett variance continues to grow diffusively. These curves illustrate a structural difference, not an auxiliary-state-independent prediction of the doubled model. The full quantum Brownian model also contains zero-temperature bath fluctuations and generally non-Markovian memory; the split-quaternionic construction is therefore not a $T=0$ Caldeira--Leggett limit.

\begin{figure}[t]
\centering
\includegraphics[width=\columnwidth]{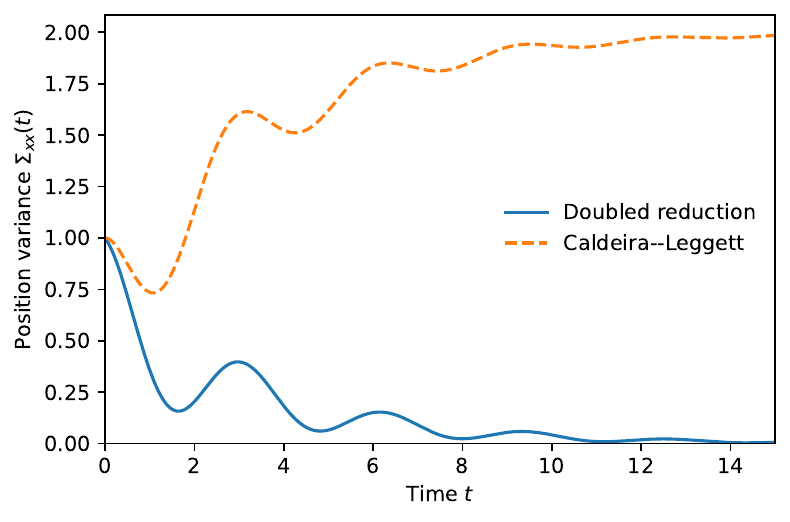}
\caption{Reduced position variance for the specified factorized doubled Gaussian and for the high-temperature Markovian Caldeira--Leggett model in the underdamped harmonic case. Parameters are $m=\hbar=1$, $\omega_0=1$, $\gamma=0.3$, $\ell=1$, and $k_BT=2$. The doubled result is a coherent, auxiliary-state-dependent reduction; the Caldeira--Leggett curve contains bath-induced diffusion.}
\label{fig:varianceHarmonic}
\end{figure}

\begin{figure}[t]
\centering
\includegraphics[width=\columnwidth]{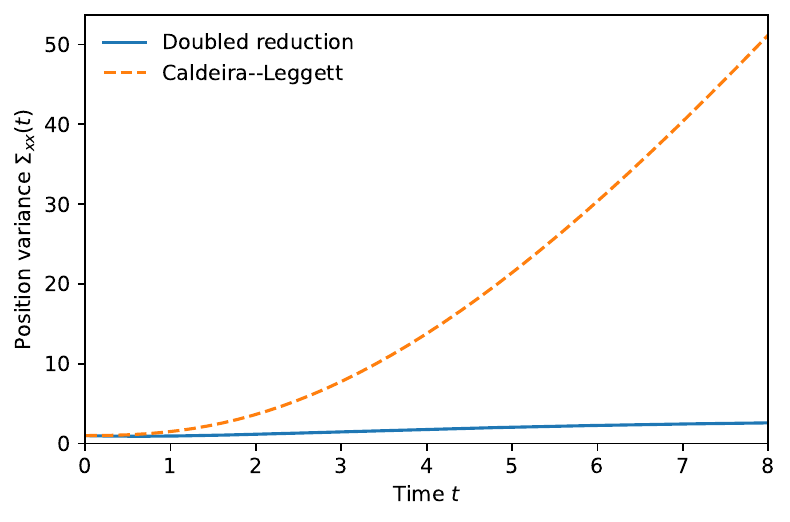}
\caption{Same comparison in the free damped limit $\omega_0=0$, with the remaining parameters as in Fig.~\ref{fig:varianceHarmonic}. For the chosen doubled state the reduced variance saturates, whereas the Caldeira--Leggett variance grows because of momentum diffusion.}
\label{fig:varianceFree}
\end{figure}

\begin{table*}[t]
\caption{Conceptual distinction between the doubled split-quaternion model and Caldeira--Leggett open-system dynamics.}
\label{tab:SQCL}
\renewcommand{\arraystretch}{1.15}
\setlength{\tabcolsep}{5pt}
\begin{tabular}{lll}
\toprule
\parbox[t]{2.5cm}{\textbf{Aspect}}
& \parbox[t]{5.3cm}{\textbf{Doubled split-quaternion model}}
& \parbox[t]{5.3cm}{\textbf{Caldeira--Leggett model}}\\
\midrule
\parbox[t]{2.5cm}{Structure}
& \parbox[t]{5.3cm}{Conservative enlarged system with physical and adjoint sectors}
& \parbox[t]{5.3cm}{Physical system coupled to and reduced from a reservoir}\\[0.7ex]
\midrule
\parbox[t]{2.5cm}{Damping and fluctuations}
& \parbox[t]{5.3cm}{Kinematic damping/anti-damping pair; no fundamental diffusion term}
& \parbox[t]{5.3cm}{Dissipation accompanied by bath noise and diffusion}\\[0.7ex]
\midrule
\parbox[t]{2.5cm}{Time symmetry}
& \parbox[t]{5.3cm}{Manifest in the enlarged dynamics}
& \parbox[t]{5.3cm}{Broken in the reduced effective dynamics}\\[0.7ex]
\midrule
\parbox[t]{2.5cm}{Temperature}
& \parbox[t]{5.3cm}{No intrinsic temperature or fluctuation--dissipation relation}
& \parbox[t]{5.3cm}{Finite- or zero-temperature reservoir dynamics}\\[0.7ex]
\midrule
\parbox[t]{2.5cm}{Main role}
& \parbox[t]{5.3cm}{Algebraic organization of reversible embedding and indefinite invariant}
& \parbox[t]{5.3cm}{Microscopic or effective description of decoherence and thermalization}\\[0.7ex]
\midrule
\parbox[t]{2.5cm}{Main limitation}
& \parbox[t]{5.3cm}{Auxiliary state and reduction are not fixed by the damped equation}
& \parbox[t]{5.3cm}{Results depend on bath spectrum and approximation scheme}\\
\bottomrule
\end{tabular}
\end{table*}

The two descriptions therefore answer different questions. The split-quaternion formalism clarifies the reversible doubled geometry; it is not a substitute for open-system theory.

\section{Conclusion}

We have given a split-quaternionic formulation of the Bateman dual oscillator that remains coherent across classical dynamics, canonical quantization, and the quadratic path integral. Classically, a two-sided hypercomplex equation encodes the damped and adjoint sectors, the conserved indefinite norm, and the underdamped--critical--overdamped trichotomy. The corresponding real scalar action on the split-complex subalgebra is exactly Bateman's Lagrangian in null components.

The quantum construction stays within ordinary complex Hilbert-space quantum mechanics. Operator-valued split quaternions lie in $\mathbb C\otimes_{\mathbb R}\mathbb H_s$, and the noncommutativity of their mechanical components makes the unsymmetrized product $\widehat p\,\overline{\widehat p}$ split-vector valued. Symmetrization defines a scalar quantum split norm that reproduces the canonical Hamiltonian. The null idempotents then resolve the compact relations into the crossed canonical pairs of the Bateman phase space. The velocity coupling generates
\begin{equation}
[m\dot{\widehat y},m\dot{\widehat x}]=\mathrm{i}\hbar m\gamma,
\end{equation}
and hence a damping-controlled Robertson--Schr\"odinger relation. This result is related to magnetic and noncommutative descriptions of Bateman dynamics, but here it arises without postulating a noncommutative configuration space.

The conservative limit must be interpreted carefully. Setting $\gamma=0$ leaves a doubled positive/negative oscillator pair, so the null-coordinate operator $\widehat x=(\widehat\rho+\widehat r)/\sqrt2$ still satisfies $[\widehat x,m\dot{\widehat x}]=0$. The standard Heisenberg relation is recovered for the positive normal mode $(\widehat\rho,\widehat\pi_\rho)$, or by removing the auxiliary sector before quantization. This distinction clarifies both the usefulness and the limitation of the doubled construction.

Finally, the scalar split action yields the exact propagator and Gaussian evolution. A formal trace over the adjoint coordinate produces a reduced density only after an auxiliary initial state is specified. Its contrast with Caldeira--Leggett dynamics confirms that the Bateman model is a coherent zero-noise embedding of damping, not a microscopic theory of diffusion, decoherence, or thermalization.

%\begin{acknowledgments}

%\end{acknowledgments}

\appendix

\section{Algebraic derivations}\label{app:alg}

We use the faithful representation $\SH\cong M_2(\mathbb{R})$,
\begin{equation}
1\mapsto\mathbb{I},\; \la\mapsto\begin{pmatrix}1&0\\0&-1\end{pmatrix},\; \sqi\mapsto\begin{pmatrix}0&1\\-1&0\end{pmatrix},\; \si\mapsto\begin{pmatrix}0&-1\\-1&0\end{pmatrix},
\end{equation}
so that $z=a+b\la+c\sqi+d\si\mapsto\big(\begin{smallmatrix}a+b & c-d\\ -c-d & a-b\end{smallmatrix}\big)$. One checks $\la^2=\si^2=\mathbb{I}$, $\sqi^2=-\mathbb{I}$, $\sqi\la\si=\mathbb{I}$, and $\det=a^2-b^2+c^2-d^2=\Norm{z}$.

\emph{Product \eqref{eq:fullprod}.} Expanding $z_1z_2$ with the bilinear extension of \eqref{eq:basis}--\eqref{eq:prods} and collecting by basis element gives the four components of \eqref{eq:fullprod}; equivalently, multiply the two $2\times2$ images and read off $(a,b,c,d)$ from the entries. Setting $z_2=\ov{z_1}$ collapses this to $\Norm{z}=a^2-b^2+c^2-d^2$.

\emph{Exponential trichotomy \eqref{eq:exp}.} For $A=a_0+h$ with $h=\Hy A$, the hypercomplex part squares to a real scalar, $h^2=b^2-c^2+d^2$, because the cross terms cancel. The power series therefore resums into trigonometric functions for $h^2<0$, hyperbolic functions for $h^2>0$, and a linear polynomial for $h^2=0$. Since $A$ and $\ov{A}$ commute, $\Norm{e^{At}}=e^{At}e^{\ov{A} t}=e^{2\Rey(A)t}$.

\emph{Conserved norm \eqref{eq:dmod}.} For $\dot z=Az+zB$,
\[
\frac{d}{dt}(z\ov{z})=(Az+zB)\ov{z}+z(\ov{z}\,\ov{A}+\ov{B}\,\ov{z}).
\]
Using $z\ov{z}=\Norm{z}\in\mathbb{R}$ and $B+\ov{B}=2\Rey B$, one obtains
\[
\frac{d}{dt}\Norm{z}=2(\Rey A+\Rey B)\Norm{z}.
\]
Hence the norm is conserved for all initial data iff $\Rey A+\Rey B=0$; the stronger conditions $\Rey A=\Rey B=0$ hold in Eq.~\eqref{eq:zeq}.

\emph{Encoding reproduces \eqref{eq:trev}.} With $z=mu+mw\la+m\omega_0\rho\,\sqi+m\omega_0 r\,\si$, $A=-\tfrac{\gamma}{2}\la+\omega_0\sqi$, and $B=-\tfrac{\gamma}{2}\la$, direct substitution into $\dot z=Az+zB$ and separation by basis element yields, componentwise, exactly $m\dot u=-\alpha_d w-m\omega_0^2\rho$ and $m\dot w=-\alpha_d u-m\omega_0^2 r$ with $\alpha_d=m\gamma$; that is, \eqref{eq:trev}. (The difference of the two sides vanishes identically once \eqref{eq:trev} is imposed, which we verified in the matrix representation.)

\emph{Cross-term cancellation and \eqref{eq:Hnorm}.} With doubled $p=m(u+w\la)$ and $q=\rho+\la r$, the energy $\Norm{z}/2m$ regroups as $\Norm{p}/2m+m\omega_0^2\Norm{q}/2$ provided the mixed term $\sqi q\ov{p}-p\ov{q}\,\sqi$ vanishes. Using $\sqi\la=-\la\sqi$ one finds $\sqi q\ov{p}-p\ov{q}\,\sqi=0$ identically (verified in the representation), leaving \eqref{eq:Hnorm}.

\emph{Euler--Lagrange reduction.} From \eqref{eq:Lcorr}, $\pi_\rho=\partial\Lag/\partial u=mu+\tfrac{\alpha_d}{2}r$ and $\pi_r=\partial\Lag/\partial w=-mw-\tfrac{\alpha_d}{2}\rho$ with $u=\dot\rho$, $w=\dot r$. The Euler--Lagrange equations $\tfrac{d}{dt}\partial\Lag/\partial\dot\rho=\partial\Lag/\partial\rho$ and likewise for $r$ give $m\dot u+\tfrac{\alpha_d}{2}\dot r=-m\omega_0^2\rho-\tfrac{\alpha_d}{2}w$ and $-m\dot w-\tfrac{\alpha_d}{2}\dot\rho=m\omega_0^2 r-\tfrac{\alpha_d}{2}u$, which simplify to \eqref{eq:trev} using $\dot r=w$, $\dot\rho=u$. The Legendre transform then gives Eq.~\eqref{eq:Hcan}, which reduces to Eq.~\eqref{eq:E} after substituting the momentum--velocity relations.

\section{Path-integral evaluation}\label{app:pi}
The rotated variables \eqref{eq:xyrhotransform} put the doubled split-quaternionic Lagrangian in the form \eqref{eq:LxyFull}. The Euler--Lagrange equations are the damped and anti-damped harmonic equations
\begin{equation}
\ddot x+\gamma\dot x+\omega_0^2x=0,
\qquad
\ddot y-\gamma\dot y+\omega_0^2y=0.
\end{equation}
For the underdamped case, the endpoint solutions are \eqref{eq:xclharm}--\eqref{eq:yclharm}. Their endpoint velocities are
\begin{align}
\dot x_i&=\frac{\Omega e^{\gamma t/2}}{\sin\Omega t}x_f
-\left(\frac{\gamma}{2}+\Omega\cot\Omega t\right)x_i,\\
\dot x_f&=\left(\Omega\cot\Omega t-\frac{\gamma}{2}\right)x_f
-\frac{\Omega e^{-\gamma t/2}}{\sin\Omega t}x_i,\\
\dot y_i&=\frac{\Omega e^{-\gamma t/2}}{\sin\Omega t}y_f
+\left(\frac{\gamma}{2}-\Omega\cot\Omega t\right)y_i,\\
\dot y_f&=\left(\frac{\gamma}{2}+\Omega\cot\Omega t\right)y_f
-\frac{\Omega e^{\gamma t/2}}{\sin\Omega t}y_i.
\end{align}
On solutions,
\begin{equation}
\Lag_{\rm sq}=\frac{d}{dt}\left[\frac{m}{2}\left(x\dot y+y\dot x\right)\right].
\end{equation}
Therefore
\begin{equation}
S_{\rm cl}=\frac{m}{2}\left[x_f\dot y_f+y_f\dot x_f-x_i\dot y_i-y_i\dot x_i\right],
\end{equation}
which gives \eqref{eq:SclMain}--\eqref{eq:Comega}. The mixed Van Vleck matrix, with row index in $\mathbf X_f$ and column index in $\mathbf X_i$, is
\begin{equation}
(\mathsf{M}_{fi})_{ab}
:=-\frac{\partial^2S_{\rm cl}}{\partial X_{f,a}\,\partial X_{i,b}}
=
-\begin{pmatrix}
0&C_\Omega(t)\\
B_\Omega(t)&0
\end{pmatrix}_{ab},
\end{equation}
whose determinant is
\[
\det\mathsf{M}_{fi}
=-B_\Omega C_\Omega
=-\frac{m^2\Omega^2}{\sin^2(\Omega t)}.
\]
In two dimensions the Van Vleck prefactor is
\[
\frac{1}{2\pi\ii\hbar}
\left(\det\mathsf{M}_{fi}\right)^{1/2}.
\]
On the first caustic interval, the branch fixed by the normalized short-time limit is
\[
\left(\det\mathsf{M}_{fi}\right)^{1/2}
=
\frac{\ii m\Omega}{\sin(\Omega t)},
\]
which gives Eq.~\eqref{eq:KMain}; continuation across later intervals supplies the usual Maslov phase. In the free limit $\omega_0\to0$, $\Omega\to \ii\gamma/2$, $\sin(\Omega t)\to \ii\sinh(\gamma t/2)$, and $\cot(\Omega t)\to -\ii\coth(\gamma t/2)$, giving the free coefficients displayed below \eqref{eq:KMain}. The $t\to0$ limit is the kernel for the cross kinetic term $m\dot x\dot y$:
\begin{equation}
K\to \frac{m}{2\pi\hbar t}\exp\left[\frac{\ii m}{\hbar t}(x_f-x_i)(y_f-y_i)\right],
\end{equation}
which converges distributionally to $\delta(x_f-x_i)\delta(y_f-y_i)$.

\end{document}